\newcolumntype{M}[1]{>{\centering\arraybackslash}m{#1}}
\newtheorem{lemma}{Lemma}
\newtheorem{theorem}{Theorem}
\newtheorem{corollary}{Corollary}
\newtheorem{definition}{Definition}
\newtheorem{remark}{Remark}
\newtheorem{example}{Example}
\newtheorem{proposition}{Proposition}
\newcommand{\ds}{\displaystyle}
\newcommand{\bs}{\backslash}
\newcommand{\lp}{\left (}
\newcommand{\rp}{\right )}
\newcommand{\cC}{\mathcal{C}}
\newcommand{\cD}{\mathcal{D}}
\newcommand{\cE}{\mathcal{E}}
\newcommand{\cF}{\mathcal{F}}
\newcommand{\cH}{\mathcal{H}}
\newcommand{\cM}{\mathcal{M}}
\newcommand{\cX}{\mathcal{X}}
\newcommand{\hgc}{\textrm{(HG)}}
\newcommand{\hlc}{\textrm{(HL)}}
\newcommand{\hpc}{\textrm{(HP)}}
\newcommand{\hfc}{\textrm{(HF)}}
\newcommand{\ugc}{\textrm{(UG)}}
\newcommand{\ulc}{\textrm{(UL)}}
\newcommand{\upc}{\textrm{(UP)}}
\newcommand{\ufc}{\textrm{(UF)}}
\newcommand{\cgc}{\textrm{(CG)}}
\newcommand{\clc}{\textrm{(CL)}}
\newcommand{\cpc}{\textrm{(CP)}}
\newcommand{\cfc}{\textrm{(CF)}}
\newcommand{\Sint}{\textrm{(S5)}}
\newcommand{\ver}{\;\vert\;}
\newcommand{\hh}{\hat{\cH}}
\newcommand{\hg}{\hat{G}}
\begin{document}

\title{On a hypergraph probabilistic graphical model}

\author{
Mohammad Ali Javidian \thanks{University of South Carolina, Columbia, SC 29208,({\tt javidian@email.sc.edu}).}\and	
Linyuan Lu
\thanks{University of South Carolina, Columbia, SC 29208,
({\tt lu@math.sc.edu}). This author was supported in part by NSF
grant DMS-1600811 and ONR grant N00014-17-1-2842.}\and
Marco Valtorta \thanks{University of South Carolina, Columbia, SC 29208,
  ({\tt mgv@cse.sc.edu}). This author was supported in part by ONR grant N00014-17-1-2842.}
\and
Zhiyu Wang \thanks{University of South Carolina, Columbia, SC 29208,
({\tt zhiyuw@email.sc.edu}).} 
}

\maketitle

\begin{abstract}
    
We propose a directed acyclic hypergraph framework for a probabilistic graphical model that we call \textit{Bayesian hypergraphs}. The space of directed acyclic hypergraphs is much larger than the space of chain graphs. Hence Bayesian hypergraphs can model much finer factorizations than Bayesian networks or LWF chain graphs and provide simpler and more computationally efficient procedures for factorizations and interventions.
Bayesian hypergraphs also allow a modeler to represent causal patterns of interaction such as Noisy-OR graphically (without additional annotations). We introduce global, local and pairwise Markov properties of Bayesian hypergraphs and prove under which conditions they are equivalent.  We define a projection operator, called shadow, that maps Bayesian hypergraphs to chain graphs,  and show that the Markov properties of a Bayesian hypergraph are equivalent to those of its corresponding chain graph. We extend the causal interpretation of LWF chain graphs to Bayesian hypergraphs and provide corresponding formulas and a graphical criterion for intervention.
\end{abstract}

\section{Introduction}

Probabilistic graphical models are graphs in which nodes represent random variables and edges represent conditional independence assumptions. They provide a compact way to represent the joint probability distributions of a set of random variables. In \textit{undirected} graphical models, e.g., Markov networks (see \cite{DLS, Pearl88}), there is a simple rule for determining independence: two set of nodes $A$ and $B$ are conditionally independent given $C$ if removing $C$ separates $A$ and $B$. On the other hand, \textit{directed} graphical models, e.g. Bayesian networks (see \cite{KSC, WL, Pearl88}), which consist of a directed acyclic graph (DAG) and a corresponding set of conditional probability tables, have a more complicated rule (d-separation) for determining independence. 
More complex graphical models include various types of graphs with edges of several types (e.g., \cite{CW, WCP, RS, Pena}), including chain graphs \cite{LW, Lau}, for which different interpretations have emerged \cite{AMP, Drton}. 


	 

Probabilistic Graphical Models (PGMs) enjoy a well-deserved popularity because they allow explicit representation of structural constraints in the language of graphs and similar structures.  From the perspective of efficient belief update, factorization of the joint probability distribution of random variables corresponding to variables in the graph is paramount, because it allows decomposition of the calculation of the evidence or of the posterior probability \cite{LJ}.  The proliferation of different PGMs that allow factorizations of different kinds leads us to consider a more general graphical structure in this paper, namely directed acyclic hypergraphs.  Since there are many more hypergraphs than DAGs, undirected graphs, chain graphs, and, indeed, other graph-based networks, as discussed in Remark \ref{counting}, Bayesian hypergraphs can model much finer factorizations and thus are more computationally efficient.  When tied to probability distributions, directed acyclic hypergraphs specify independence (and possibly other) constraints through their Markov properties; we call the new PGM resulting from the directed acyclic hypergraphs and their Markov properties \textit{Bayesian hypergraphs}.  We provide such properties and show that they are consistent with the ones used in Bayesian networks, Markov networks, and LWF chain graphs, when the directed acyclic hypergraphs are suitably restricted. We show in Section \ref{sec:BH-fact} that there are situations that may be of interest to a probabilistic or causal modeler that can be modeled more explicitly using Bayesian hypergraphs; in particular, some causal patterns, such as independence of causal influence (e.g., Noisy-OR), can be expressed graphically in Bayesian hypergraphs, while they require a numerical specification in DAGs or chain graphs.
We provide a causal interpretation of Bayesian hypergraphs that extends the causal interpretation of LWF chain graphs \cite{LR}, by giving  corresponding formulas and a graphical criterion for intervention.

The paper is organized as follows: In Section \ref{sec:defns}, we  introduce some common notations, terminology and concepts on graphs and hypergraphs. In Section \ref{sec:UD} and Section \ref{sec:CG}, we review the Markov properties and factorizations in the case of undirected graphs and chain graphs. In Section \ref{sec:HBN}, we  introduce the Bayesian hypergraphs model, discuss the factorizations, Markov properties and its relations to chain graphs. In Section \ref{sec:intervention}, we discuss how interventions can be achieved in Bayesian hypergraphs.
Section 7 concludes the paper and includes some directions for further work.

\section{Terminology and concepts}\label{sec:defns}

In this paper, we use $[n]$ to denote the set $\{1,2,\ldots, n\}$. For $a, b\in \mathbb{Z}$, We use $[a,b]$ to denote $\{k\in \mathbb{Z}: a\leq k\leq b\}$.
Given a set $h$, we use $|h|$ to denote the number of elements in $h$.
\subsection{Graphs}

A graph $G=(V,E)$ is an ordered pair $(V,E)$ where $V$ is a finite set of \textit{vertices} (or \textit{nodes}) and $E \subseteq V\times V$ consists of a set of ordered pairs of vertices $(v,w) \in V\times V$. Given a graph $G$, we will use $V(G), E(G)$ to denote the set of vertices and edges of $G$ respectively.
An edge $(v,w) \in E$ is \textit{directed} if $(w,v) \notin E$ and \textit{undirected} if $(w,v) \in E$. We write $v\to w$ if $(v,w)$ is directed and $v - w$ if $(v,w)$ is undirected. 
If $v -w$ then we call $v$ a \textit{neighbor} of $w$ and vice versa. If $v \to w$, then we call  $v$ a \textit{parent} of $w$ and $w$ a \textit{child} of $v$. 
Let $pa_G(v)$ and $nb_G(v)$ denote the set of parents and neighbors of $v$, respectively. 
We say $v$ and $w$ are \textit{adjacent} if either $(v,w) \in E$ or $(w,v)\in E$, i.e., either $v\to w$, $w \to v$ or $v - w$. We say an edge $e$ is \textit{incident} to a vertex $v$ if $v$ is contained in $e$.
We also define the \textit{boundary} $bd(v)$ of $v$ by
    $$bd(v) = nb(v) \cup pa(v).$$ 
Moreover, given $\tau \subseteq V$, define 

$$pa_G(\tau) = \lp \ds\bigcup_{v\in \tau} pa_G(v)\rp \backslash \tau.$$
$$nb_G(\tau) = \lp \ds\bigcup_{v\in \tau} nb_G(v)\rp \backslash \tau.$$
$$bd_G(\tau) = \lp \ds\bigcup_{v\in \tau} bd_G(v)\rp \backslash \tau.$$
$$cl_G(\tau) = bd_G(\tau) \cup \tau.$$

For every graph $G=(V,E)$, we will denote the \textit{underlying undirected graph} $G^u =(V,E^u)$, i.e., $E^u = \{(v,u): (v,u) \in E \textrm{ or } (u,v) \in E\}$. 
A \textit{path} in $G$ is a sequence of distinct vertices $v_0, \ldots, v_k$ such that $(v_{i},v_{i+1}) \in E$ for all $0\leq i\leq k-1$. A path $v_0, \ldots ,v_k$ is \textit{directed} if for all $0\leq i <k$, $(v_i, v_{i+1})$ is a directed edge, i.e., $(v_i, v_{i+1}) \in E$ but $(v_{i+1}, v_{i}) \notin E$. 
A \textit{cycle} is a path with the modification that $v_k = v_0$. A cycle is \textit{partially directed} if at least one of the edges in the cycle is a directed edge.
A graph $G$ is \textit{acyclic} if $G$ contains no  partially directed cycle.
A vertex $v$ is said to be an \textit{anterior} of a vertex $u$ if there is a path from $v$ to $u$. We remark that every vertex is also an anterior of itself.  
If there is a directed path from $v$ to $u$, we call $v$ an \textit{ancestor} of $u$ and $u$ a \textit{descendent} of $v$. Moreover, $u$ is a \textit{non-descendent} of $v$ if $u$ is not a descendent of $v$. 
Let $ant(u)$ and $an(u)$ denote the set of anteriors and ancestors of $u$ in $G$ respectively. Let $de(v)$ and $nd(v)$ denote the set of descendents and non-descendents of $v$ in $G$ respectively.
For a set of vertices $\tau$, we also define $ant(\tau) = \{ant(v): v \in \tau\}.$
Again, note that $\tau \subseteq ant(\tau)$.

A \textit{subgraph} of a graph $G$ is a graph $H$ such that $V(H)\subseteq V(G)$ and each edge present in $H$ is also present in $G$ and has the same type. An \textit{induced subgraph} of $G$ by a subset $A \subseteq V(G)$, denoted by $G_A$ or $G[A]$, is a subgraph of $G$ that contains all and only vertices in $A$ and all edges of $G$ that contain only vertices in $A$. A \textit{clique} or \textit{complete graph} with $n$ vertices, denoted by $K_n$, is a graph such that every pair of vertices is connected by an undirected edge.

Now we can define several basic graph representations used in probabilistic graphical models.
An \textit{undirected graph} is a graph such that every edge is undirected. A \textit{directed acyclic graph (DAG)} is a graph such that every edge is directed and contains no directed cycles. A \textit{chain graph} is a graph without partially directed cycles.
Define two vertices $v$ and $u$ to be equivalent if there is an undirected path from $v$ to $u$. Then the equivalence classes under this equivalence relation are the \textit{chain components} of $G$. 
For a vertex set $S$, define $E^*(S)$ as the edge set of the complete undirected graph on $S$.
Given a graph $G=(V,E)$ with chain components $\{\tau: \tau \in \cD\}$, the \textit{moral graph} of $G$, denoted by $G^m = (V,E^m)$, is a graph such that $V(G^m)=V(G)$ and $E^m = E^u \cup \bigcup_{\tau\in \cD} E^*(bd(\tau))$, i.e., the underlying undirected graph, where the boundary w.r.t. $G$ of every chain component is made complete. The moral graphs are natural generalizations to chain graphs of the similar concept for DAGs given in \cite{LS} and \cite{Letal}. 


\subsection{Hypergraphs}

Hypergraphs are generalizations of graphs such that each edge is allowed to contain more than two vertices. Formally, an \textit{(undirected) hypergraph} is a pair $\cH = (V, \cE)$, where $V= \{v_1, v_2, \cdots, v_n\}$ is the set of \textit{vertices} (or \textit{nodes}) and $\cE = \{h_1, h_2, \cdots, h_m\}$ is the set of \textit{hyperedges} where $h_i \subseteq V$ for all $i \in [m]$. If $|h_i| = k$ for every $i\in [m]$, then we say $\cH$ is a \textit{$k$-uniform} (undirected) hypergraph. 
A \textit{directed hyperedge} or \textit{hyperarc} $h$ is an ordered pair, $h = (X,Y)$, of (possibly empty) subsets of $V$ where $X\cap Y = \emptyset$; $X$ is the called the \textit{tail} of $h$ while $Y$ is the \textit{head} of $h$. We write $X= T(h)$ and $Y = H(h)$. We say a directed hyperedge $h$ is \textit{fully directed} if none of $H(h)$ and $T(h)$ are empty.
A \textit{directed hypergraph} is a hypergraph such that all of the hyperedges are directed. A \textit{$(s,t)$-uniform directed hypergraph} is a directed hypergraph such that the tail and head of every directed edge have size $s$ and $t$ respectively. For example, any DAG is a $(1,1)$-uniform hypergraph (but not vice versa). An undirected graph is a $(0,2)$-uniform hypergraph. Given a hypergraph $\cH$, we use $V(\cH)$ and $E(\cH)$ to denote the the vertex set and edge set of $\cH$ respectively. 

We say two vertices $u$ and $v$ are {\em co-head} (or {\em co-tail}) if there is a directed hyperedge $h$ such that
$\{u,v\}\subset H(h)$ ( or $\{u,v\}\subset T(h)$ respectively). 
Given another vertex $u\neq v$, we say $u$ is a \textit{parent} of $v$,
denoted by $u\to v$, if there is
a directed hyperedge $h$ such that $u\in T(h)$ and $v\in H(h)$. If $u$ and $v$ are co-head, then $u$ is a \textit{neighbor} of $v$. If $u,v$ are neighbors, we denote them by $u -v$. Given $v \in V$, we define parent ($pa(v)$), neighbor ($nb(v)$), boundary ($bd(v)$), ancestor ($an(v)$), anterior ($ant(v)$), descendant ($de(v)$), and non-descendant ($nd(v)$) for hypergraphs exactly the same as for graphs (and therefore use the same names). The same holds for the equivalent concepts for $\tau \subseteq V$. Note that it is possible that some vertex $u$ is both the parent and neighbor of $v$. 

A {\em partially directed cycle} in $\cH$ is a sequence $\{v_1, v_2, \dots v_k\}$ satisfying that $v_i$ is either a neighbor or a parent of $v_{i+1}$ for all $1\leq i \leq k$ and $v_i \to v_{i+1}$ for some $1\leq i \leq k$. Here $v_{k+1} \equiv v_1$. We say a directed hypergraph $\cH$ is {\em acyclic} if $\cH$ contains no partially directed cycle. For ease of reference, we call a directed acyclic hypergraph a \textit{DAH} or a \textit{Bayesian hypergraph structure} (as defined in Section \ref{sec:HBN}). Note that for any two vertices $u,v$ in a directed acyclic hypergraph $\cH$, $u$ can not be both the parent and neighbor of $v$ otherwise we would have a partially directed cycle.

\begin{remark}
DAHs are generalizations of undirected graphs, DAGs and chain graphs. In particular an undirected graph can be viewed as a DAH in which every hyperedge is of the form $(\emptyset, \{u,v\})$. A DAG is a DAH in which every hyperedge is of the form $(\{u\},\{v\})$. A chain graph is a DAH in which every hyperedge is of the form $(\emptyset, \{u,v\})$ or $(\{u\},\{v\})$.
\end{remark}

We define the \textit{chain components} of $\cH$ as the equivalence classes under the equivalence relation where two vertices $v_1,v_t$ are equivalent if there exists a sequence of distinct vertices $v_1, v_2, \ldots, v_t$ such that $v_i$ and $v_{i+1}$ are co-head for all $i\in [t-1]$. The chain components $\{\tau : \tau \in \cD\}$ yields an unique natural partition of the vertex set $V(\cH) = \bigcup_{\tau \in \cD} \tau$ with the following properties:

\begin{proposition}\label{DAG-partition}
	Let $\cH$ be a DAH and $\{\tau : \tau \in \cD\}$ be its chain components. Let $G$ be a graph obtained from $\cH$ by contracting each element of $\{\tau : \tau \in \cD\}$ into a single vertex and creating a directed edge from $\tau_i \in V(G)$ to $\tau_j \in V(G)$ in $G$ if and only if there exists a hyperedge $h \in E(\cH)$ such that $T(h) \cap \tau_i \neq \emptyset$ and $H(h) \cap \tau_j \neq \emptyset$. Then $G$ is a DAG.
\end{proposition}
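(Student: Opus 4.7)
The plan is to show that every possible directed cycle in $G$, including a self-loop, can be lifted to a partially directed cycle in $\cH$, contradicting that $\cH$ is a DAH. Since $G$ is constructed with directed edges only, establishing that $G$ is a DAG amounts to ruling out both self-loops and directed cycles of length at least two; both cases use the same template, namely pick representative vertices for each arc in $G$ and use the defining equivalence of chain components to connect them by neighbor sequences inside $\cH$.

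First, I would rule out self-loops. Suppose $G$ has an edge from some $\tau \in \cD$ to itself; then there is a hyperedge $h \in E(\cH)$ with $u \in T(h) \cap \tau$ and $v \in H(h) \cap \tau$. Since $T(h) \cap H(h) = \emptyset$ by definition, $u \neq v$ and $u \to v$ in $\cH$. Because $u$ and $v$ lie in the same chain component, there is a sequence of distinct vertices $u = w_1, w_2, \ldots, w_t = v$ in $\tau$ with $w_i$ and $w_{i+1}$ co-head, hence neighbors. The closed sequence $u, v = w_t, w_{t-1}, \ldots, w_1 = u$ then realizes a partially directed cycle in $\cH$, contradicting acyclicity.

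For the longer case, assume a directed cycle $\tau_{i_1} \to \tau_{i_2} \to \cdots \to \tau_{i_k} \to \tau_{i_1}$ in $G$ with $k \geq 2$ and pairwise distinct $\tau_{i_j}$. For each $j$, choose a hyperedge $h_j$ certifying $\tau_{i_j} \to \tau_{i_{j+1}}$ (indices mod $k$) together with $u_j \in T(h_j) \cap \tau_{i_j}$ and $w_{j+1} \in H(h_j) \cap \tau_{i_{j+1}}$, so that $u_j \to w_{j+1}$ in $\cH$. Inside each $\tau_{i_j}$, because $w_j$ and $u_j$ lie in the same chain component, select a co-head sequence of distinct vertices from $w_j$ to $u_j$ (taken to be trivial if $w_j = u_j$). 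Splicing these intra-component neighbor sequences with the parent arcs $u_j \to w_{j+1}$ in cyclic order produces a closed sequence in $\cH$ whose consecutive pairs are either neighbors or parents, containing $k$ parent steps, which is precisely a partially directed cycle. The main obstacle is bookkeeping: one must verify that all vertices in the lifted cycle are distinct so that it genuinely qualifies as a partially directed cycle. This follows because the chain components are pairwise disjoint equivalence classes and each intra-component co-head sequence is chosen with distinct vertices, so no vertex can repeat across or within the pieces; combined with the contradiction, this forces $G$ to be acyclic and thus a DAG.
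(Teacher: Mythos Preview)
Your proposal is correct and follows essentially the same approach as the paper's proof: both rule out self-loops and longer directed cycles in $G$ by lifting them to partially directed cycles in $\cH$ via representative vertices for each arc and co-head sequences within components. You are more explicit than the paper about the distinctness bookkeeping for the lifted cycle, but the core argument is identical.
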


\begin{proof}

First of all, clearly $G$ is a directed graph. Now since $\cH$ is a DAH, there is no directed hyperedge such that both its head and tail intersect a common chain component. Hence $G$ has no self-loop. It remains to show that there is no directed cycle in $G$. Supporse for contradiction that there is a directed cycle $\tau_1, \tau_2, \ldots, \tau_k$ in $G$. Then by the construction of $G$, there is a sequence of hyperedges $\{h_1, h_2, \ldots, h_k\}$  such that $T(h_i) \cap \tau_i \neq \emptyset$ and $H(h_i) \cap \tau_{i+1} \neq \emptyset$ (with $\tau_{k+1} \equiv \tau_1$). Since there is a path between any two vertices in the same component, it follows that there is a partially directed cycle in $\cH$, which contradicts that $\cH$ is acyclic. Hence we can conclude that $G$ is indeed a DAG.
	
\end{proof}

Note that the DAG obtained in Proposition \ref{DAG-partition} is unique and given a DAH $\cH$ we call such $G$ the \textit{canonical DAG} of $\cH$.
A chain component $\tau$ of $\cH$ is \textit{terminal} if the out degree of $\tau$ in $G$ is $0$, i.e., there is no $\tau'\neq \tau$ such that $\tau \to \tau'$ in $G$. A chain component $\tau$ is \textit{initial} if the in degree of $\tau$ in $G$ is $0$, i.e., there is no $\tau'\neq \tau$ such that $\tau' \to \tau$ in $G$. We call a vertex set $A \subseteq V(\cH)$ an \textit{anterior set} if it can be generated by stepwise removal of terminal chain components.  We call $A$ an \textit{ancestral set} if $bd(A) = \emptyset$ in $\cH$. We remark that given a set $A$, $ant(A)$ is also the smallest ancestral set containing $A$.

A sub-hypergraph of $\cH=(V, \cE)$ is a directed hypergraph $\cH' = (V',\cE')$ such that $V' \subseteq V$ and $\cE' \subseteq \cE$. Given $S\subseteq V(\cH)$, we say a directed hypergraph $\cH'$ is a sub-hypergraph of $\cH$ induced by $S$, denoted by $\cH_S$ or $\cH[S]$, if $V(\cH') = S$ and $h\in E(\cH')$ if and only if $h \in E(\cH)$ and $H(h)\cup T(h) \subseteq S$. 

To illustrate the relationship between a directed acyclic hypergraph and a chain graph, we will introduce the concept of a \textit{shadow} of a directed acyclic hypergraph. Given a directed acyclic hypergraph $\cH$,  let the \textit{(directed) shadow} of $\cH$, denoted by $\partial(\cH)$, be a graph $G$ such that $V(G) = V(\cH)$, and for every hyperedge $h=(X,Y) \in E(\cH)$, $G[Y]$ is a clique (i.e. every two vertices in $G[Y]$ are neighbors) and there is a directed edge from each vertex of $X$ to each vertex of $Y$ in $G$.

\begin{proposition}\label{prop:shadow-lem}
Suppose $\cH$ is a directed acyclic hypergraph and $G$ is the shadow of $\cH$. Then 
\begin{enumerate}[(i)]
    \item $G$ is a chain graph.
    \item For every vertex $v\in V(\cH)=V(G)$, $nb_G(v) = nb_{\cH}(v)$ and $pa_G(v)= pa_{\cH}(v)$.
\end{enumerate}
\begin{proof}
For $(i)$, note that since $\cH$ is acyclic, there is no partially directed cycle in $\cH$. It follows by definition that there is no partially directed cycle in $G$. Hence, the shadow of a directed acyclic hypergraph is a chain graph. $(ii)$ is also clear from the definition of the shadow.
\end{proof}
\end{proposition}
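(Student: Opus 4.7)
The plan is to prove (ii) first by directly unpacking the definition of the shadow, and then to derive (i) as an easy consequence. The key observation is that the shadow construction precisely mirrors the neighbor and parent relations of $\cH$: undirected edges in $G$ encode co-headedness in $\cH$, while directed edges in $G$ encode the tail-to-head relation in $\cH$.

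For part (ii), I would argue as follows. By the definition of the shadow, an undirected edge $u - v$ exists in $G$ if and only if there is some hyperedge $h \in E(\cH)$ with $\{u,v\} \subseteq H(h)$; this is exactly the definition of $u$ and $v$ being co-head in $\cH$, i.e., $u \in nb_{\cH}(v)$. Similarly, $u \to v$ in $G$ if and only if there is some hyperedge $h \in E(\cH)$ with $u \in T(h)$ and $v \in H(h)$, which is exactly the definition of $u \in pa_{\cH}(v)$. This gives both equalities $nb_G(v) = nb_{\cH}(v)$ and $pa_G(v) = pa_{\cH}(v)$.

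For part (i), I would suppose for contradiction that $G$ contains a partially directed cycle $v_1, v_2, \ldots, v_k, v_1$, in which every consecutive pair is joined by an edge (directed or undirected) and at least one of those edges is directed, say $v_j \to v_{j+1}$. By part (ii), each consecutive pair $(v_i, v_{i+1})$ that is joined by an undirected edge in $G$ must be a pair of neighbors in $\cH$, and each pair joined by a directed edge $v_i \to v_{i+1}$ in $G$ must satisfy $v_i \in pa_{\cH}(v_{i+1})$. Therefore the same sequence $v_1, v_2, \ldots, v_k$ is a partially directed cycle in $\cH$, contradicting the hypothesis that $\cH$ is acyclic. Hence no partially directed cycle exists in $G$, so $G$ is a chain graph.

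This proof is essentially bookkeeping with definitions, and the only subtle point is confirming that both the undirected-edge relation and the directed-edge relation in $G$ translate faithfully back to neighbor and parent relations in $\cH$. I do not anticipate any genuine obstacle, since the shadow was defined precisely to make this correspondence hold; the work is in stating it cleanly.
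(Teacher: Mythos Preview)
Your proposal is correct and follows essentially the same approach as the paper: both arguments amount to observing that the shadow construction faithfully transfers the neighbor and parent relations from $\cH$ to $G$, so a partially directed cycle in $G$ would yield one in $\cH$. The only cosmetic difference is that you prove (ii) first and invoke it explicitly to establish (i), whereas the paper handles (i) directly in one line and then remarks that (ii) is clear; your version is slightly more detailed but not materially different.
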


\begin{figure}[htb]
	\begin{center}
		\begin{minipage}{.2\textwidth}
			\resizebox{3cm}{!}{\begin{tikzpicture}

\tikzset{vertex/.style = {shape=circle,draw,minimum size=1.5em}}
\tikzset{edge/.style = {->,> = latex'}}


\shadedraw[left color=white, right color=black, opacity= 5, draw=black!10!white] 
(0,0)--(0,1)--(1,0)--cycle;

\shadedraw[left color=white, right color=black, opacity = 5, draw=black!10!white] 
(0,0)--(1,0)--(1,1)--cycle;

\shadedraw[left color=white, right color=black, opacity = 5, draw=black!10!white] 
(1,1)--(2,0)--(2,1)--cycle;

\node[vertex] (a) at (0,0){$a$};
\node[vertex] (b) at (0,1){$b$};
\node[vertex] (c) at (1,0){$c$};
\node[vertex] (d) at (1,1){$d$};
\node[vertex] (e) at (2,0){$e$}; 
\node[vertex] (f) at (2,1){$f$}; 

\draw[edge] (c) to (e);

\end{tikzpicture}   }
		\end{minipage}
			\hspace{1cm}
		\begin{minipage}{.2\textwidth}
			\resizebox{3cm}{!}{\begin{tikzpicture}

\tikzset{vertex/.style = {shape=circle,draw,minimum size=1.5em}}
\tikzset{edge/.style = {->,> = latex'}}

\node[vertex] (a) at (0,0){$a$};
\node[vertex] (b) at (0,2){$b$};
\node[vertex] (c) at (1,1){$c,d$};
\node[vertex] (e) at (3,1){$e,f$};

\draw[edge] (a) to (c);
\draw[edge] (b) to (c);
\draw[edge] (c) to (e);

\end{tikzpicture}   }
		\end{minipage}
		\hspace{1cm}
		\begin{minipage}{.2\textwidth}
		\resizebox{3cm}{!}{\begin{tikzpicture}

\tikzset{vertex/.style = {shape=circle,draw,minimum size=1.5em}}
\tikzset{edge/.style = {->,> = latex'}}

\node[vertex] (a) at (0,0){$a$};
\node[vertex] (b) at (0,1){$b$};
\node[vertex] (c) at (1,0){$c$};
\node[vertex] (d) at (1,1){$d$};
\node[vertex] (e) at (2,0){$e$}; 
\node[vertex] (f) at (2,1){$f$};

\draw (d) to (c);
\draw (e) to (f);

\draw[edge] (a) to (c);
\draw[edge] (a) to (d);
\draw[edge] (b) to (c);
\draw[edge] (d) to (f);
\draw[edge] (d) to (e);
\draw[edge] (c) to (e);

\end{tikzpicture}   }
		\end{minipage}
		\caption{(1) a DAH $\cH$. (2) the canonical DAG of $\cH$. (3) the shadow of $\cH$.}
		\label{fig:0}
	\end{center} 
\end{figure}
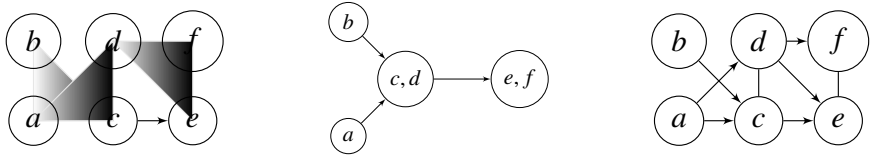

\subsection{Hypergraph drawing}
In this subsection, we present how directed edges are drawn in this paper and illustrate the concepts with an example. 
For a fully directed hyperedge with two vertices (both head and tail contain exactly one vertex), we use the standard arrow notation. For a fully directed hyperedge with at least three vertices, we use a shaded polygon to represent that edge, with the darker side as the head and the lighter side as the tail. For hyperedges of the type $(\emptyset, A)$, we use an undirected line segment (i.e. $-$) to denote the hyperedge if $|A| = 2$ and a shaded polygon with uniform gray color if $|A| \geq 3$.
For example, in Figure \ref{fig:0}, the directed hyperedges are $(\{a,b\},\{c\}),(\{a\},\{c,d\}),(\{d\},\{e,f\}), (\{c\},\{e\})$.  Here $a$ and $b$ are co-tail, $c$ and $d$, $e$ and $f$ are co-head. Figure \ref{fig:0} (2) shows the canonical DAG associated to $\cH$ with four chain components:$\{a\},\{b\},\{c,d\},\{e,f\}$. Figure \ref{fig:0} (3) shows the shadow of $\cH$.

\subsection{Construction of a directed acyclic hypergraph from chain graph}\label{sec:hypergraph-extension}

In this subsection, we show how to construct a directed acyclic hypergraph from a chain graph according to the LWF interpretation. Due to the expressiveness and generality of a directed hypergraph, other constructions may exist too. 
Let $G$ be a chain graph with $n$ vertices. We will explicitly construct a directed acyclic hypergraph $\cH$ on $n$ vertices that correspond to $G$. We remark that the construction essentially creates a hyperedge for each maximal clique in the moral graph of $G_{cl(\tau)}$ for every chain component $\tau$ of $\cH$.\\

\noindent\textit{Construction:}
$$V(\cH) = V(G).$$
The edge set of $\cH$ is constructed in two phases:
\begin{description}
\item Phase I: 
\begin{itemize}
    \item For each $v \in V(G)$, let $S_v$ be the set of children of $v$ in $G$. Consider the subgraph $G'$ of $G[S_v]$ induced by the undirected edges in $G[S_v]$. For each maximal clique (with vertex set $K$) in $G'$, add the directed hyperedge $(\{v\},K)$ into $\cH$.
    \item Let $\cH'$ be the resulting hypergraph after performing the above procedure for every $v\in V(G)$. Now for every maximal clique $K$ (every edge in $K$ is undirected) in $G$, if $K \not\subseteq H(h)$ for every $h\in E(\cH')$, add the directed hyperedge $(\emptyset, K)$ into $\cH$.
\end{itemize}

\item Phase II: Let $\cH'$ be the resulting hypergraph constructed from Phase I and $\{\tau: \tau \in \cD\}$ be the chain components of $G$.
Given $\tau$, let $\cH^*_{\tau}$ be the set of edges $h$ in $\cH'_{cl(\tau)}$ such that $H(h) \cap \tau \neq \emptyset$.

Define 
$$ E(\cH) =\bigcup_{\tau \in \cD}\Bigg\{ 
\Bigg( \ds\bigcup_{\substack{h \in E(\cH^*_{\tau})\\B\subseteq H(h)}} T(h), B \Bigg)
: B= \bigcap_{h \in \cF} H(h), \cF \subseteq E(\cH^*_{\tau}) \Bigg\}.$$

\end{description}

Note that the resulting hypergraph $\cH$ is a directed acyclic hypergraph since a partially directed cycle $C$ in $\cH$ corresponds to a directed cycle in $G$.
Moreover, the above construction gives us an injection from the family of chain graphs with $n$ vertices to the family of directed acyclic hypergraphs with $n$ vertices. 

\begin{figure}[htb]
	\begin{center}
		\begin{minipage}{.2\textwidth}
			\resizebox{3cm}{!}{\begin{tikzpicture}
\tikzset{vertex/.style = {shape=circle,draw,minimum size=1.5em}}
\tikzset{edge/.style = {->,> = latex'}}


\node[vertex] (a) at (0,0)  {$a$};
\node[vertex] (b) at (1,-0.5)  {$b$}; 
\node[vertex] (c) at (2,0)  {$c$};
\node[vertex] (d) at (0,1.5){$d$};
\node[vertex] (e) at (1,1.5){$e$};
\node[vertex] (f) at (2,1.5){$f$};


\draw[edge] (a) to (d);
\draw[edge] (a) to (e);
\draw[edge] (b) to (e);
\draw[edge] (b) to (f);
\draw[edge] (c) to (f);

\draw (d) to (e);
\draw (e) to (f);

\end{tikzpicture}}
		\end{minipage}
		\hspace{1cm}
		\begin{minipage}{.2\textwidth}
			\resizebox{3cm}{!}{\begin{tikzpicture}
\tikzset{vertex/.style = {shape=circle,draw,minimum size=1.5em}}
\tikzset{edge/.style = {->,> = latex'}}


\shadedraw[bottom color=white, top color=black, opacity=5, draw=black!10!white] (0,0)--(0,1.5)--(1,1.5)--cycle;

\shadedraw[bottom color=white, top color=black, opacity=5, draw=black!10!white]  (1,-0.5)--(1,1.5)--(2,1.5)--cycle;

\shadedraw[bottom color=white, top color=black, opacity=5, draw=black!10!white]  (1,-0.5)--(2,0)--(2,1.5)--cycle;

\shadedraw[bottom color=white, top color=black, opacity=5, draw=black!10!white]  (0,0)--(1,-0.5)--(1,1.5)--cycle;

\node[vertex] (a) at (0,0)  {$a$};
\node[vertex] (b) at (1,-0.5)  {$b$}; 
\node[vertex] (c) at (2,0)  {$c$};
\node[vertex] (d) at (0,1.5){$d$};
\node[vertex] (e) at (1,1.5){$e$};
\node[vertex] (f) at (2,1.5){$f$}; 

\end{tikzpicture}}
		\end{minipage}
		\caption{(1) a simple chain graph $G$;  (2) the corresponding DAH of $G$ in the LWF interpretation.}
		\label{f1}
	\end{center} 
\end{figure}
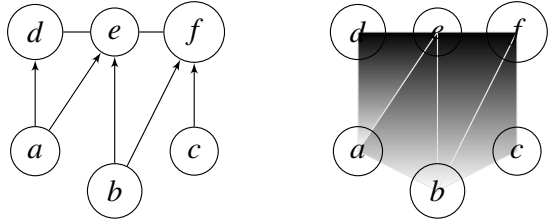

Figure \ref{f1} contains an example of a simple chain graph and its corresponding version in the hypergraph representation. Recall that every fully directed hyperedge is represented (in the drawing) by a colored convex region. The darker side is the head and the lighter side is the tail.
We will detail the hyperedges existing in every phase of the construction:
\begin{itemize}
\item Phase I: the hyperedges in $\cH$ are $\{a,d,e\}$ and $\{b,e,f\}$ and $\{c,f\}$.
\item Phase II: For each chain component $\tau$, we obtain all subsets $B$ of $\tau$ which are the intersections of the heads of the hyperedges intersecting $\tau$. For each such $B$ obtained, create a hyperedge whose head is $B$ and whose tail is the union of  the tails of the hyperedges containing $B$ in its head. In Figure \ref{f1}, the set of such $B$'s are $\{d,e\},\{e\},\{e,f\}$, $\{f\}$. Hence 
$$E(\cH) = \Bigg\{ \{a,d,e\},\{e,a,b\},\{b,e,f\},\{b,c,f\}\Bigg\}.$$
\end{itemize}

Hence the resulting hypergraph $\cH$ is the one in Figure \ref{f1}(2).

For ease of reference, given a chain graph, we will call the hypergraph $\cH$ constructed above the \textit{canonical LWF  DAH} of $G$. We say $\cH$ is \textit{hypermoralized} from $G$ if $\cH$ is the canonical LWF DAH of $G$.
Moreover, we call the family of all such hypergraphs (i.e. the canonical LWF DAH of some chain graph) \textit{LWF DAHs}.

\begin{figure}[htb]
	\begin{center}
    \includegraphics[width=.50\textwidth]{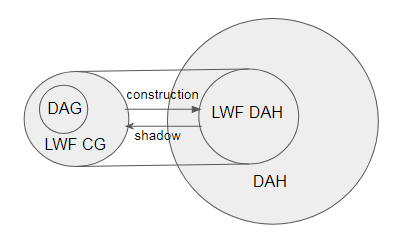}
    \caption{Relationship between chain graphs and directed acyclic hypergraphs}
		\label{f:inc}
    \end{center}
\end{figure}

\begin{remark}
In this section, we gave an injective mapping from the space of chain graphs to the space of directed acyclic hypergraphs such that the LWF DAHs have the same Markov properties as LWF chain graphs. We believe some other types of chain graphs can be modeled by DAHs too (e.g. MVR DAHs) but we do not explore them in this paper.
\end{remark}

We will summarize the relations between a chain graph and its canonical LWF DAH in the following lemma:

\begin{lemma}\label{lem:comp}
Let $G$ be a chain graph and $\cH$ be its canonical LWF DAH. Then we have
\begin{enumerate}[(i)]
    \item\label{comp1} For each vertex $v \in V(G)=V(\cH)$,
             $nb_G(v) = nb_{\cH}(v)$ and 
            $pa_G(v) = pa_{\cH}(v)$.
    \item\label{comp2} $G$ is the shadow of $\cH$.
    \item\label{comp3} $\cH$ is a directed acyclic hypergraph.
\end{enumerate}
\begin{proof}

We will first show \ref{comp1}. Note that by our construction in Phase I, if two vertices are neighbors in $G$, then they are co-head in $\cH$. Moreover, if $u$ is the parent of $v$ in $G$, then $u$ is still the parent of $v$ in $\cH$. These relations remain true in Phase II.
Hence we obtain that $nb_G(v) \subseteq nb_{\cH}(v)$ and $pa_G(v) \subseteq pa_{\cH}(v)$ for all $v\in V(\cH)$. It remains to show that for each $v \in V(\cH)$, no additional neighbor or parent of $v$ (compared to the case in $G$) is added in the construction. 
In Phase I , every hyperedge added is either of the form $(\emptyset, K)$ or $(\{w\},K)$ where $K\subseteq V$ induces a complete undirected graph in $G$ and $w$ is the parent of every element in $K$. Hence for every $v\in V(\cH)$, no additional neighbor or parent of $v$ is added in Phase I. 
Now let us examine Phase II. Given an edge $h = (A,B)\in E(\cH)$, there exists some $\tau \in \cD(G)$ such that $B = \bigcap_{h \in \cF} H(h)$ for some $\cF \subseteq E(\cH^*_{\tau})$. Moreover, $A = \bigcup_{\substack{h \in E(\cH^*_{\tau}))\\B\subseteq H(h)}} T(h)$. Note for every pair of elements $u,v\in B$, $u,v$ are already neighbors in $G$ since $u,v \in H(h)$ for some $h \in \cF$ from Phase I. Moreover, for every $v\in A, u\in B$, $v$ is already a parent of $u$ in $G$ since there exists some $h$ constructed in Phase I such that $u\in H(h)$ and $v\in T(h)$. Therefore, it follows that any edge defined in Phase II does not create any new neighbor or parent for any $v \in V(G)$. Thus, we can conclude that for all $v\in V(G)=V(\cH)$, $nb_G(v) = nb_{\cH}(v)$ and $pa_G(v) = pa_{\cH}(v)$.

\ref{comp2} is implied by \ref{comp1} by the definition of a shadow.
\ref{comp3} is implied by \ref{comp2} since $G$ is acyclic and $G$ is the shadow of $\cH$.
\end{proof}
\end{lemma}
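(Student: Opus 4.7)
The plan is to establish part (i) by direct inspection of the two construction phases, and then derive (ii) and (iii) as near-immediate consequences. For the forward inclusions in (i), I would argue that if $u-v$ in $G$, then $u$ and $v$ both lie in some maximal undirected clique of $G$, which is captured in Phase I either as a hyperedge $(\emptyset, K)$ or as the head of some $(\{w\}, K)$ produced when processing a common parent $w$; in either case $u$ and $v$ become co-head in $\cH$. Similarly, if $u\to v$ in $G$ then $v$ is a child of $u$ and belongs to some maximal clique $K$ of the undirected subgraph induced on the children of $u$, so Phase I produces the hyperedge $(\{u\}, K)$, exhibiting $u\in pa_{\cH}(v)$. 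Thus $nb_G(v)\subseteq nb_{\cH}(v)$ and $pa_G(v)\subseteq pa_{\cH}(v)$ for every $v$.

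The reverse inclusions require checking that neither phase introduces new adjacencies. Every Phase I hyperedge has the form $(\emptyset, K)$ or $(\{w\}, K)$ with $K$ a clique of the undirected subgraph of $G$ and (in the second case) $w$ a common $G$-parent of all elements of $K$, so no new co-head or parent pair arises. For Phase II, each new edge $(A, B)$ is determined by a subfamily $\cF\subseteq E(\cH^*_{\tau})$ via $B=\bigcap_{h\in\cF} H(h)$ and $A=\bigcup_{h\in E(\cH^*_{\tau}),\, B\subseteq H(h)} T(h)$. Any two vertices in $B$ are already co-head in some $h\in\cF$, and any $v\in A$ with $u\in B$ satisfies $v\in T(h')$, $u\in H(h')$ for some Phase I edge $h'$, so $v\to u$ was already present in $\cH$ before Phase II. This completes (i).

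For (ii), the definition of the shadow places an undirected edge between $u$ and $v$ exactly when they are co-head in $\cH$, and a directed edge $u\to v$ exactly when $u\in pa_{\cH}(v)$; combining this with (i) gives $\partial(\cH)=G$. For (iii), suppose for contradiction that $\cH$ contains a partially directed cycle $v_1,\ldots,v_k$. Each co-head step becomes an undirected edge and each parent step becomes a directed edge in the shadow, so (ii) lifts this sequence to a partially directed cycle in $G$, contradicting that $G$ is a chain graph. The only delicate step is the Phase II verification in (i), since one must be careful about what the intersection/union construction can produce; however, once the equalities $B=\bigcap H(h)$ and $A=\bigcup T(h)$ are unpacked, the check is essentially bookkeeping, so I do not anticipate a serious obstacle.
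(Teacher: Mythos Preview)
Your proposal is correct and follows essentially the same approach as the paper: establish the two inclusions in (i) by inspecting the hyperedges produced in Phase~I and Phase~II, then derive (ii) directly from (i) via the definition of the shadow, and (iii) from (ii) by lifting any hypothetical partially directed cycle in $\cH$ to one in $G$. The only cosmetic difference is that the paper states (iii) in one line by appealing to the acyclicity of $G$, whereas you spell out the contradiction explicitly; the content is the same.
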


\section{Markov properties for undirected graphs}\label{sec:UD}

In this section, we will summarize some basic results on the Markov properties of undirected graphs. Let us first introduce some notations. In the rest of this week, let $(X_{\alpha})_{\alpha\in V}$ be a collection of random variables taking values in some product space $\cX =  \times_{\alpha\in V} \cX_{\alpha}$. Let $P$ denote a probability measure on $\cX$. For a subset $A$ of $V$, we use $\cX_A$ to denote $\ds\times_{\alpha\in A} \cX_{\alpha}$ and $P_A$ is the marginal measure on $\cX_A$. A typical element of $\cX_A$ is denoted by $x_A = (x_{\alpha})_{\alpha\in A}$. We will use the short notation $A \Perp B \ver C$ for $X_A \Perp X_B \ver X_C$.

Recall that an independence model $\Perp$ is a ternary relation over subsets of a finite set $V$. The following properties have been defined for the conditional independences of probability distributions. Let $A,B,C,D$ be disjoint subsets of $V$ where $C$ may be the empty set.

\begin{description}
\item[S1 (Symmetry)]\label{S1} $A\Perp B \ver C \implies B\Perp A \ver C$;
\item[S2 (Decomposition)]\label{S2} $A \Perp BD \ver C \implies \lp A \Perp B \ver C \textrm{ and } A \Perp D \ver C \rp$;	
\item[S3 (Weak Union)]\label{S3} $A \Perp BD \ver C \implies \lp A\Perp B \ver DC \textrm{ and } A\Perp D \ver BC\rp$;
\item[S4 (Contraction)]\label{S4} $\lp A \perp B \ver DC \textrm{ and } A \Perp D \ver C\rp \iff A \Perp BD \ver C$;
\item[S5 (Intersection)]\label{S5} $\lp A \perp B \ver DC \textrm{ and } A \Perp D \ver BC\rp \implies A \Perp BD \ver C$;
\item[S6 (Composition)]\label{S6} $\lp A \perp B \ver C \textrm{ and } A \Perp D \ver C\rp \iff A \Perp BD \ver C$;
\end{description}

An independence model is a \textit{semi-graphoid} if it satisfies the first four independence properties listed above. A discussion of conditional independence can be found in Dawid \cite{Dawid} where it is shown that any probability measure is a semi-graphoid. Also see Studeny \cite{Studeny92} and Pearl \cite{Pearl88} for a discussion of soundness and (lack of) completeness of these axioms.
If a semi-graphoid further satisfies the intersection property, we say it is a \textit{graphoid}. A \textit{compositional graphoid} further satisfies the composition property.
We follow the same naming convention as Frydenberg \cite{Fry}. Given an undirected graph $G$, we say $C$ separates $A$ and $B$ in $G$ if there is no path from any vertex in $A$ to any vertex in $B$ in $G[V(G)\bs C]$. If $G$ is an undirected graph, then a probability measure $P$ is said to be:
\begin{description}
\item[(UP)] \textit{pairwise $G$-Markovian} if $\alpha \Perp \beta \ver V\bs \{\alpha, \beta\}$ whenever $\alpha$ and $\beta$ are non-adjacent in $G$. 
\item[(UL)] \textit{local $G$-Markovian} if $\alpha \Perp V\bs cl(\alpha) \ver bd(\alpha)$ for all $\alpha \in V(G)$.
\item[(UG)] \textit{global $G$-Markovian} if $A\Perp B \ver C$ whenever $C$ separates $A$ and $B$ in $G$.
\end{description}

The following theorem by Pearl and Paz \cite{Pearl86} gives a sufficient condition for the equivalence of $\ugc$, $\ulc$ and $\upc$.

\begin{theorem}{(\cite{Pearl86})}
If $G$ is an undirected graph and $P$ satisfies (S5), then $\ugc$, $\ulc$ and $\upc$ are equivalent and $P$ is said to be $G$-Markovian if they hold.
\end{theorem}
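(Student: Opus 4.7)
The plan is to establish the chain of implications $\ugc \Rightarrow \ulc \Rightarrow \upc \Rightarrow \ugc$, where only the last step needs the intersection property (S5); the first two hold for any semi-graphoid.

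First I would show $\ugc \Rightarrow \ulc$ by the trivial observation that $bd(\alpha)$ separates $\{\alpha\}$ from $V \setminus cl(\alpha)$ in $G$ (every path out of $\alpha$ must use a neighbor), so applying global Markov with $A = \{\alpha\}$, $B = V \setminus cl(\alpha)$, $C = bd(\alpha)$ yields the local property. Next, for $\ulc \Rightarrow \upc$, suppose $\alpha,\beta$ are non-adjacent. Then $\beta \in V \setminus cl(\alpha)$, and writing $V \setminus cl(\alpha) = \{\beta\} \cup \bigl((V \setminus cl(\alpha)) \setminus \{\beta\}\bigr)$ and applying weak union (S3) to $\alpha \Perp V\setminus cl(\alpha) \mid bd(\alpha)$ produces $\alpha \Perp \beta \mid bd(\alpha) \cup (V \setminus cl(\alpha) \setminus \{\beta\}) = V \setminus \{\alpha,\beta\}$, which is exactly pairwise Markov.

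The main content is $\upc \Rightarrow \ugc$, which I would prove by backward induction on $|C|$ (equivalently, downward induction on the size of the separator), using (S5) crucially. Let $A,B,C$ be disjoint with $C$ separating $A$ from $B$. First reduce to the case $A \cup B \cup C = V$: let $A^\ast$ denote the union of connected components of $G[V \setminus C]$ meeting $A$, and $B^\ast = V \setminus (A^\ast \cup C)$; then $C$ still separates $A^\ast$ from $B^\ast$, so establishing $A^\ast \Perp B^\ast \mid C$ yields $A \Perp B \mid C$ by decomposition (S2). For the base case $|C| = |V|-2$, we have $A = \{\alpha\}$, $B = \{\beta\}$ with $\alpha,\beta$ non-adjacent (since $C$ separates them), and $\upc$ delivers the statement directly. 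For the inductive step with $A \cup B \cup C = V$ and, say, $|A| \geq 2$, pick $\alpha \in A$ and set $A' = A \setminus \{\alpha\}$. Then $C \cup \{\alpha\}$ separates $A'$ from $B$ and $C \cup A'$ separates $\{\alpha\}$ from $B$, each with a strictly larger separator, so the inductive hypothesis yields
\[
A' \Perp B \mid C \cup \{\alpha\} \qquad \text{and} \qquad \{\alpha\} \Perp B \mid C \cup A'.
\]
Applying symmetry (S1), then (S5) to the pair $B \Perp A' \mid C \cup \{\alpha\}$ and $B \Perp \{\alpha\} \mid C \cup A'$, I obtain $B \Perp A \mid C$, and one more use of symmetry completes the step.

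The only delicate point is verifying that the inductive reduction preserves the separation hypothesis — that $C \cup \{\alpha\}$ really separates $A'$ from $B$ and $C \cup A'$ really separates $\{\alpha\}$ from $B$ in $G$ — but this is immediate from the fact that adding vertices to a separating set keeps it separating. The non-routine ingredient is the algebraic manoeuvre with (S5); without intersection one cannot glue the two conditional independences back together, which is precisely why (S5) is assumed.
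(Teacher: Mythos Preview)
Your argument is correct and is essentially the classical Pearl--Paz/Lauritzen proof: the easy implications $\ugc \Rightarrow \ulc \Rightarrow \upc$ via separation and weak union, followed by the downward induction on $|V|-|C|$ using (S5) to glue $B \Perp A' \mid C \cup \{\alpha\}$ and $B \Perp \{\alpha\} \mid C \cup A'$ into $B \Perp A \mid C$. The reduction to $A \cup B \cup C = V$ and the verification that enlarging the separator preserves separation are handled properly.

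There is nothing to compare against, however: the paper does not supply its own proof of this theorem. It is stated as a quoted result attributed to Pearl and Paz \cite{Pearl86} and used as background for the analogous chain-graph and hypergraph statements later on. So your write-up stands on its own as a self-contained proof of a result the paper only cites.
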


Conditional independences and thus Markov properties are closely related to factorizations. A probability measure $P$ on $\cX$ is said to \textit{factorize} according to $G$ if for each clique $h$ in $G$, there exist a non-negative function $\psi_h$ depending on $x_h$ only and there exists a product measure $\mu = \times_{\alpha \in V} \;\mu_{\alpha}$ on $\cX$ such that $P$ has density $f$ with respect to $\mu$ where $f$ has the form

\begin{equation}
    f(x) = \ds\prod_{h\in \cC} \psi_h(x)
\end{equation}
where $\cC$ is the set of maximal cliques in $G$. If $P$ factorizes according to $G$, we say $P$ has property (UF).
It is known (see \cite{Lau}) that 

$$\textrm{\ufc $\implies$ \ugc $\implies$ \ulc $\implies$ \upc}.$$

Moreover, in the case that $P$ has a positive and continuous density, it can be proven using M\"{o}bius inversion lemma that $\upc \implies \ufc$. This result seems to have been discovered in various forms by a number of authors \cite{Speed} and is usually attributed to Hammersley and Clifford \cite{HC} who proved the result in the discrete case.



\section{Markov properties of chain graphs}\label{sec:CG}

We use the same notations as Section \ref{sec:UD}.
Let $G$ be a chain graph and $P$ be a probability measure defined on some product space $\cX = \times_{\alpha \in V(G)} \cX_{\alpha}$. Then $P$ is said to be 

\begin{description}

\item[(CP)] \textit{pairwise $G$-Markovian}, if for every pair $(v,u)$ of non-adjacent vertices with $u \in nd(v)$, 
	\begin{equation}
	v \Perp u \ver nd(v) \backslash \{v,u\}.
	\end{equation}

\item[(CL)] \textit{local $G$-Markovian}, relative to $G$, if for any vertex $v \in V(G)$, 
	\begin{equation}
	v \Perp nd(v)\bs cl(v) \ver bd(v).
	\end{equation}
\item[(CG)] \textit{global $\cH$-Markovian}, relative to $G$, if for all $A,B,C\subseteq V$ such that $C$ separates $A$ and $B$ in $(G_{ant(A\cup B\cup C)})^m$, the moral graph of the smallest ancestral set containing $A\cup B \cup C$, we have 
$$ A \Perp B \ver C.$$
\end{description}

The factorization in the case of a chain graph involves two parts. Suppose $\{\tau: \tau \in \cD\}$ is the set of chain components of $G$. Then $P$ is said to \textit{factorize} according to $G$ if it has density $f$ that satisfies:

\begin{enumerate}[(i)]
\item $f$ factorizes as in the directed acyclic case:
 $$f(x) = \ds\prod_{\tau\in \cD} f(x_{\tau} \ver x_{pa(\tau)}).$$
\item For each $\tau \in \cD$, $f$ factorizes in the moral graph of $G_{\tau \cup pa(\tau)}$:
$$f(x_{\tau} \ver x_{pa(\tau)}) = Z^{-1}(x_{pa(\tau)})\ds\prod_{h \in \cC} \psi_h(x)$$
where $\cC$ is the set of maximal cliques in $G_{\tau \cup pa(\tau)}^m$, $\psi_h(x)$ depends only on $x_h$ and $$Z^{-1}(x_{pa(\tau)}) =  \ds\int_{\cX_{\tau}} \prod_{h \in \cC} \psi_h(x) \;\mu_{\tau} (dx_{\tau}).$$
\end{enumerate}
If a probability measure $P$ factorizes according to $G$, then we say $P$ satisfies $\cfc$. From arguments analogous to the directed and undirected cases, we have that in general $$\cfc \implies \cgc \implies \clc \implies \cpc.$$
If we assume $\Sint$, then all Markov properties are equivalent.
\begin{theorem}{(\cite{Fry})}\label{thm:chain-Markov-equiv}
Assume that a probability measure $P$ defined on a chain graph $G$ is such that $\Sint$ holds for disjoint subsets of $V(G)$, then 
$$\cfc \iff \cgc \iff \clc \iff \cpc.$$
\end{theorem}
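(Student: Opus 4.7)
The plan is to close the implication chain. Since $\cfc \Rightarrow \cgc \Rightarrow \clc \Rightarrow \cpc$ is already stated to hold unconditionally, it suffices to prove $\cpc \Rightarrow \clc$, $\clc \Rightarrow \cgc$, and $\cgc \Rightarrow \cfc$ under $\Sint$, thereby establishing the four-way equivalence.

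For $\cpc \Rightarrow \clc$, I would fix $v \in V$ and enumerate $nd(v) \setminus cl(v) = \{u_1, \ldots, u_k\}$. Each $u_i$ is non-adjacent to $v$ with $u_i \in nd(v)$, so $\cpc$ gives $v \Perp u_i \ver nd(v) \setminus \{v, u_i\}$. I would then prove by induction on $j$ that
\[ v \Perp \{u_1, \ldots, u_j\} \ver nd(v) \setminus (\{v\} \cup \{u_1, \ldots, u_j\}). \]
The base case $j=1$ is the pairwise statement just obtained. For the step from $j$ to $j+1$, apply $\Sint$ with $A = \{v\}$, $B = \{u_1, \ldots, u_j\}$, $D = \{u_{j+1}\}$, and $C = nd(v) \setminus (\{v\} \cup \{u_1, \ldots, u_{j+1}\})$: the inductive hypothesis supplies $A \Perp B \ver DC$, $\cpc$ supplies $A \Perp D \ver BC$, and $\Sint$ produces $A \Perp BD \ver C$. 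Taking $j = k$ yields $\clc$.

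For $\clc \Rightarrow \cgc$, fix disjoint $A, B, C$, set $W = ant(A \cup B \cup C)$ and $M = (G_W)^m$, and assume $C$ separates $A$ and $B$ in $M$. I would first check that $\clc$ on $G$ restricts to $\clc$ on $G_W$: since $W$ is anterior, $nd_{G_W}(v) \subseteq nd_G(v)$ and $bd_{G_W}(v) = bd_G(v)$ for $v \in W$, so decomposition transfers the independence. Next, I would show that $\clc$ on $G_W$ forces the pairwise undirected Markov property on $M$: for non-adjacent $u, v$ in $M$, $u$ and $v$ are neither joined by an edge of $G_W$ nor jointly contained in the boundary of any chain component of $G_W$, and an iterated $\Sint$-combination of $\clc$ statements (mirroring Step~1) collapses the conditioning to $W \setminus \{u, v\}$. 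The Pearl--Paz theorem for undirected graphs under $\Sint$ promotes this to the global undirected Markov property on $M$, and the assumed separation delivers $A \Perp B \ver C$.

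For $\cgc \Rightarrow \cfc$, let $\{\tau : \tau \in \cD\}$ be the chain components. Applying $\cgc$ on the canonical DAG of chain components I would obtain the directed factorization $f(x) = \prod_{\tau \in \cD} f(x_\tau \ver x_{pa(\tau)})$; for each $\tau$, $\cgc$ applied to non-adjacent pairs in $G_{\tau \cup pa(\tau)}^m$ (with conditioning on $x_{pa(\tau)}$) shows that $f(x_\tau \ver x_{pa(\tau)})$ is pairwise Markov on that moral graph, and the undirected Hammersley--Clifford-type factorization (valid once $\Sint$ upgrades pairwise to global) yields the clique decomposition. The main obstacle I expect is Step~2: verifying that non-adjacency in $M$ provides enough structural separation in $G_W$ for an iterated $\Sint$-combination of $\clc$ to isolate a single pair $\{u,v\}$ in the conditioning. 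This is precisely the Markov-theoretic reason that moralization is the correct operation attached to chain graphs, and getting the bookkeeping right is the heart of the argument.
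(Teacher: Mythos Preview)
The paper does not prove this theorem; it is quoted with a citation to Frydenberg~\cite{Fry} and then used as a black box (e.g., in Corollary~\ref{cor:BH-Markov-equiv}). There is therefore no in-paper argument to compare your proposal against.

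On the proposal itself: your Step~1 is correct and is the standard (S5)-induction. Your Step~2 follows Frydenberg's own route---restrict to the ancestral set, pass to the moral graph, and invoke the undirected Pearl--Paz equivalence---and you correctly flag that the nontrivial content is showing that non-adjacency in $(G_W)^m$ can be reached by an (S5)-combination of local statements; this is indeed where the work lies and your sketch does not yet discharge it. Step~3 has a genuine gap: the Hammersley--Clifford direction (global undirected Markov $\Rightarrow$ clique factorization) is not known to follow from $\Sint$ alone; it requires a strictly positive density. So ``(S5) upgrades pairwise to global, hence factorization'' conflates two different hypotheses. If you read the theorem literally as stated here, the $\cgc\Rightarrow\cfc$ direction cannot be closed with the tools you list; Frydenberg's equivalence under (S5) is for $\cpc\Leftrightarrow\clc\Leftrightarrow\cgc$, and the factorization leg needs the stronger positivity assumption (which in turn implies (S5)).
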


\section{Bayesian Hypergraphs}\label{sec:HBN}

A \textit{Bayesian hypergraph} (BH) is a probabilistic graphical model that represents a set of variables and their conditional dependencies through an acyclic directed hypegraph $\cH$. Hypergraphs contain many more edges than chain graphs. Thus a Bayesian hypergraph is a more general and powerful framework for studying conditional independence relations that arise in various statistical contexts.

\subsection{Markov Properties of Bayesian hypergraphs}

Analogous to chain graph's case, we can define the Markov properties of a Bayesian hypergraph in a variety of ways. Let $\cH$ be a directed acyclic hypergraph with chain components $\{\tau:\tau \in \cD\}$. We say that a probability measure $P$ defined on $\cX = \times_{\alpha \in V(\cH)} \cX_{\alpha}$ is:

\begin{description}

\item[(HP)] \textit{pairwise $\cH$-Markovian}, relative to $\cH$, if for every pair $(v,u)$ of non-adjacent vertices in $\cH$ with $u \in nd(v)$, 
	\begin{equation}
	v \Perp u \ver nd(v) \backslash \{v,u\}.
	\end{equation}

\item[(HL)] \textit{local $\cH$-Markovian}, relative to $\cH$, if for any vertex $v \in V(\cH)$, 
	\begin{equation}
	v \Perp nd(v)\bs cl(v) \ver bd(v).
	\end{equation}
\item[(HG)] \textit{global $\cH$-Markovian}, relative to $\cH$, if for all $A,B,C\subseteq V$ such that $C$ separates $A$ and $B$ in $\lp\partial(\cH_{ant(A\cup B\cup C)})\rp^m$, the moral graph of the (directed) shadow of the smallest ancestral set containing $A\cup B\cup C$, we have 
$$ A \Perp B \ver C.$$


\end{description}

\begin{definition}
A \textit{Bayesian hypergraph} is a triple $(V,\cH, P)$ such that $V$ is a set of random variables, $\cH$ is a DAH on the vertex set $V$ and $P$ is a multivariate probability distribution on $V$ such that the local Markov property, i.e., $\hlc$, holds with respect to the DAH $\cH$.
\end{definition}

Given a Bayesian hypergraph $(V,\cH,P)$, we call $\cH$ the \textit{Bayesian hypergraph structure} or the \textit{underlying DAH} of the Bayesian hypergraph. For ease of reference, we simply use $\cH$ to denote the Bayesian hypergraph. Moreover, for a Bayesian hypergraph $\cH$ whose underlying DAH is a LWF DAH, we call $\cH$ a \textit{LWF Bayesian hypergraph}.

\begin{remark}\label{rk:Markov-equiv}
Observe that by Proposition \ref{prop:shadow-lem} and the definitions of the hypergraph Markov properties, a Bayesian hypergraph has the same pairwise, local and global Markov properties as its shadow, which is a chain graph.
\end{remark}

By Remark \ref{rk:Markov-equiv}, we can derive the following corollaries from results on the Markov properties of chain graphs:

\begin{corollary}\label{cor:Markov_easy}
$$\hgc \implies \hlc \implies \hpc.$$
\end{corollary}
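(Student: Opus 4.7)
The plan is to reduce this to the analogous chain of implications for chain graphs, using the shadow construction as a bridge. Let $G := \partial(\cH)$; by Proposition \ref{prop:shadow-lem}, $G$ is a chain graph, and for every vertex $v$ we have $nb_G(v) = nb_{\cH}(v)$ and $pa_G(v)=pa_{\cH}(v)$. Consequently the derived sets $bd(v)$, $cl(v)$ and $nd(v)$ also coincide in $\cH$ and $G$, which is the core observation behind Remark \ref{rk:Markov-equiv}. Since the conditions $\hpc$ and $\hlc$ are built entirely from these sets and from adjacency, they translate verbatim into the chain graph conditions $\cpc$ and $\clc$ for $G$. The implication $\hlc \Rightarrow \hpc$ is then immediate from the chain graph implication $\clc \Rightarrow \cpc$ recorded in Section \ref{sec:CG}.

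For $\hgc \Rightarrow \hlc$, the only additional ingredient is to verify that $\hgc$ for $\cH$ is the same statement as $\cgc$ for $G$. This amounts to showing that for any ancestral set $S$ in $\cH$ one has $\partial(\cH_S) = G_S$ (the operations of taking shadow and induced sub-object commute on ancestral sets). I would argue this as follows: an edge of $G_S$ arises from some hyperedge $h \in E(\cH)$ with a vertex $v \in H(h) \cap S$ (and possibly a vertex of $T(h)$ in $S$); every other vertex of $H(h)$ is a neighbor of $v$, and every vertex of $T(h)$ is a parent of $v$, so both lie in $bd_{\cH}(v) \subseteq S$ because $bd_{\cH}(S)=\emptyset$. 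Hence $H(h)\cup T(h) \subseteq S$, so $h \in E(\cH_S)$ and the edge survives in $\partial(\cH_S)$. The reverse inclusion is immediate from the definition of shadow. Applying this to the ancestral set $S = ant(A\cup B\cup C)$ yields $\bigl(\partial(\cH_S)\bigr)^m = (G_S)^m$, so $\hgc$ and $\cgc$ are the same condition; the implication then follows from $\cgc \Rightarrow \clc$ in Section \ref{sec:CG}.

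The only mildly nontrivial step is the commutativity lemma $\partial(\cH_S)=G_S$ on ancestral sets; everything else is a syntactic translation through the shadow. Once that is in place, the whole corollary is a direct transfer of the chain $\cgc \Rightarrow \clc \Rightarrow \cpc$ to the hypergraph setting via $G = \partial(\cH)$.
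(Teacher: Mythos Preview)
Your proposal is correct and follows essentially the same route as the paper: reduce everything to the chain graph $G=\partial(\cH)$ via Proposition~\ref{prop:shadow-lem} and Remark~\ref{rk:Markov-equiv}, then invoke the chain $\cgc\Rightarrow\clc\Rightarrow\cpc$ from Section~\ref{sec:CG}. The paper simply asserts the equivalence of Markov properties via Remark~\ref{rk:Markov-equiv} without further argument, whereas you actually supply the commutativity step $\partial(\cH_S)=G_S$ for ancestral $S$ that justifies the global case; this is more detail than the paper gives (the same fact is only stated later, in the proof of Theorem~\ref{thm:HG-equiv}). One small point you leave implicit is that $ant_{\cH}(A\cup B\cup C)=ant_G(A\cup B\cup C)$, but this follows immediately from the coincidence of $nb$ and $pa$ that you already noted.
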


Furthermore, if we assume $\Sint$, then the global, local and pairwise Markov properties are equivalent.
\begin{corollary}\label{cor:BH-Markov-equiv}
Assume that $P$ is such that $\Sint$ holds for disjoint subsets of V. Then $$\hgc \iff \hlc \iff \hpc.$$
\end{corollary}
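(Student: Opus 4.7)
The plan is to reduce the statement directly to the corresponding equivalence for chain graphs, namely Theorem~\ref{thm:chain-Markov-equiv}, by passing to the shadow $\partial(\cH)$ of the Bayesian hypergraph $\cH$. Concretely, I would invoke Remark~\ref{rk:Markov-equiv}, which asserts that $\cH$ and $\partial(\cH)$ have identical pairwise, local, and global Markov properties; combined with Proposition~\ref{prop:shadow-lem}\,(i), the shadow is a chain graph on the same vertex set with the same neighbors and parents. So the three hypergraph conditions $\hgc$, $\hlc$, $\hpc$ are literally the same statements as the chain graph conditions $\cgc$, $\clc$, $\cpc$ applied to $\partial(\cH)$.

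First I would verify the translation carefully for each of the three properties. For $\hpc$ vs.\ $\cpc$: non-adjacency in $\cH$ agrees with non-adjacency in $\partial(\cH)$ (since $u$ and $v$ are adjacent in the shadow iff one is a parent/neighbor of the other in $\cH$, by Proposition~\ref{prop:shadow-lem}\,(ii)), and $nd_{\cH}(v) = nd_{\partial(\cH)}(v)$ because non-descendant relations depend only on the parent/neighbor structure. For $\hlc$ vs.\ $\clc$: Proposition~\ref{prop:shadow-lem}\,(ii) gives $bd_{\cH}(v) = bd_{\partial(\cH)}(v)$, hence $cl_{\cH}(v) = cl_{\partial(\cH)}(v)$, and again $nd$ agrees. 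For $\hgc$ vs.\ $\cgc$: the defining separation takes place in $\bigl(\partial(\cH_{ant(A\cup B\cup C)})\bigr)^m$, which is by definition the moral graph of the shadow; since the ancestral set computed in $\cH$ and in $\partial(\cH)$ coincide (both relying solely on the parent/neighbor relation), this is exactly the object appearing in $\cgc$ for $\partial(\cH)$.

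Once this identification is in hand, the corollary follows immediately: by Theorem~\ref{thm:chain-Markov-equiv}, assuming $\Sint$ for disjoint subsets of $V$, the three chain-graph Markov properties $\cgc$, $\clc$, $\cpc$ are equivalent relative to the chain graph $\partial(\cH)$. Transporting back through the identifications of the previous paragraph yields
\[
\hgc \iff \hlc \iff \hpc.
\]

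I do not anticipate a genuine obstacle here; the only thing to be careful about is the bookkeeping in the translation step, in particular checking that the ancestral set $ant(A\cup B\cup C)$ used in $\hgc$ matches the one used in $\cgc$ for the shadow. This is a direct consequence of Proposition~\ref{prop:shadow-lem}\,(ii), since anteriority is determined by sequences of parent/neighbor steps and these coincide in $\cH$ and $\partial(\cH)$. The heart of the argument is therefore not new analysis but a clean application of Remark~\ref{rk:Markov-equiv} and Theorem~\ref{thm:chain-Markov-equiv}.
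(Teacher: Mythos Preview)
Your proposal is correct and follows exactly the same approach as the paper: invoke Remark~\ref{rk:Markov-equiv} to identify $\hgc,\hlc,\hpc$ with $\cgc,\clc,\cpc$ for the shadow chain graph $\partial(\cH)$, then apply Theorem~\ref{thm:chain-Markov-equiv}. The paper's proof is a single sentence citing these two results; your version simply spells out the translation step in more detail.
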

\begin{proof}
This follows from Remark \ref{rk:Markov-equiv} and Theorem \ref{thm:chain-Markov-equiv}.
\end{proof}

Given a chain graph $G$, a triple $(\alpha, B, \beta)$ is a \textit{complex}\footnote{or \textit{U-structure} \cite{CWm}} in $G$ if $B$ is a connected subset of a chain component $\tau$, and $\alpha, \beta$ are two non-adjacent vertices in $bd(\tau) \cap bd(B)$. Moreover, $(\alpha, B, \beta)$ is a \textit{minimal} complex if $B= B'$ whenever $B'$ is a subset of $B$ and $(\alpha, B',\beta)$ is a complex. Frydenberg \cite{Fry} showed that two chain graphs have the same Markov properties if they have the same underlying undirected graph and the same minimal complexes. In the case of a Bayesian hypergraph, by Remark \ref{rk:Markov-equiv} and the result on the Markov equivalence of chain graphs, we obtain the following conclusion on the Markov equivalence of Bayesian hypergraphs.

\begin{corollary}
Two Bayesian hypergraphs have the same Markov properties if their shadows are Markov equivalent, i.e., their shadows have the same underlying undirected graph and the same minimal complexes.
\end{corollary}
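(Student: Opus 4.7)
The plan is to prove the corollary by reducing the question about Bayesian hypergraphs to the known question about their shadow chain graphs, and then quoting Frydenberg's Markov equivalence characterization. Concretely, suppose $\cH_1$ and $\cH_2$ are two Bayesian hypergraphs whose shadows $G_1 = \partial(\cH_1)$ and $G_2 = \partial(\cH_2)$ are Markov equivalent chain graphs, in the sense of having the same underlying undirected graph and the same minimal complexes.

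First I would invoke Remark \ref{rk:Markov-equiv}: for each $i \in \{1,2\}$, the pairwise, local, and global Markov properties of $\cH_i$ coincide with the corresponding pairwise, local, and global Markov properties of its shadow $G_i$. This is the content of the remark, which itself rests on Proposition \ref{prop:shadow-lem} (the shadow is a chain graph preserving neighbors and parents) together with the fact that the global Markov property $\hgc$ is defined precisely in terms of separation in the moralized shadow $(\partial(\cH_{\mathrm{ant}(A\cup B\cup C)}))^m$. So the list of conditional independences entailed by $\hpc$, $\hlc$, and $\hgc$ on $\cH_i$ is identical to the list entailed by $\cpc$, $\clc$, $\cgc$ on $G_i$.

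Next I would apply Frydenberg's theorem on Markov equivalence of chain graphs: two chain graphs with the same underlying undirected graph and the same minimal complexes entail exactly the same conditional independence statements under the global Markov property (and hence also under the local and pairwise properties whenever the latter coincide with $\cgc$). Applying this to $G_1$ and $G_2$ yields that $G_1$ and $G_2$ have the same Markov properties. Composing the two reductions, $\cH_1$ and $\cH_2$ share the same pairwise, local, and global Markov statements, which is exactly the conclusion.

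There is essentially no real obstacle here: the proof is a direct composition of Remark \ref{rk:Markov-equiv} with Frydenberg's chain graph equivalence theorem. The only subtlety worth stating explicitly is that the equivalence is meant property-by-property (pairwise with pairwise, local with local, global with global), and this is automatic from the remark, since for each level the hypergraph property is defined to match the corresponding chain graph property on the shadow. No additional calculation is required.
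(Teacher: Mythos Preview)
Your proposal is correct and matches the paper's approach exactly: the paper states the corollary as an immediate consequence of Remark \ref{rk:Markov-equiv} together with Frydenberg's Markov equivalence theorem for chain graphs, and that is precisely what you do. Your explicit note that the equivalence holds property-by-property is a helpful clarification, but the argument is otherwise identical.
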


\subsection{Factorization according to Bayesian hypergraphs}\label{sec:BH-fact}

The factorization of a probability measure $P$ according to a Bayesian hypergraph is similar to that of a chain graph. Before we present the factorization property, let us introduce some additional terminology. 

Given a DAH $\cH$, we use $\cH^u$ to denote the undirected hypergraph obtained from $\cH$ by replacing each directed hyperedge $h = (A,B)$ of $\cH$ into an undirected hyperedge $A\cup B$. Given a family of sets $\cF$, define a partial order $(\cF, \leq)$ on $\cF$ such that for two sets $A,B \in \cF$, $A \leq B$ if and only if $A\subseteq B$. Let $\cM(\cF)$ denote the set of maximal elements in $\cF$, i.e., no element in $\cM(\cF)$ contains another element as subset. When $\cF$ is a set of directed hyperedges, we abuse the notation to denote $\cM(\cF) = \cM(\cF^u)$. 

Let $\cH$ be a directed acyclic hypergraph and $\{\tau:\tau\in \cD\}$ be its chain components.
Assume that a probability distribution $P$ has a density $f$, with respect to some product measure $\mu = \times_{\alpha \in V}  \;\mu_{\alpha}$ on $\cX = \times_{\alpha \in V} \cX_{\alpha}$. 
Now we say a probability measure $P$ \textit{factorizes} according to $\cH$ if it has density $f$ such that 
\begin{enumerate}[(i)]
\item $f$ factorizes as in the directed acyclic case:
 \begin{equation}\label{eq:fact1}
     f(x) = \ds\prod_{\tau\in \cD} f(x_{\tau} \ver x_{pa(\tau)}).
 \end{equation}

\item For each $\tau \in \cD$, define $\cH_{\tau}^*$ to be the subhypergraph of $\cH_{\tau \cup pa(\tau)}$ containing all edges $h$ in $\cH_{\tau \cup pa(\tau)}$ such that $H(h) \subseteq \tau$. 

\begin{equation}\label{eq:fact}
   f(x_{\tau} \ver x_{pa(\tau)}) = \ds\prod_{h \in \cM(\cH_{\tau}^*)} \psi_h(x).
\end{equation}
where $\psi_h$ are non-negative functions depending only on $x_h$ and $\ds\int_{\cX_{\tau}} \prod_{h \in \cM(\cH_{\tau}^*)} \psi_h(x) \mu_{\tau} (dx_{\tau}) = 1$.
Equivalently, we can also write $f(x_{\tau} \ver x_{pa(\tau)})$ as
\begin{equation}\label{eq:fact-alt}
f(x_{\tau} \ver x_{pa(\tau)}) = Z^{-1}(x_{pa(\tau)})\ds\prod_{h \in \cM(\cH_{\tau}^*)} \psi_h(x),
\end{equation}
where $Z^{-1}(x_{pa(\tau)}) =\ds\int_{\cX_{\tau}} \prod_{h \in \cM(\cH_{\tau}^*)} \psi_h(x) \mu_{\tau} (dx_{\tau}) $.

\end{enumerate}

\begin{remark}
Note that although (LWF) Bayesian hypergraphs are generalizations of of Bayesian networks and LWF chain graph models, the underlying graph structures that represent the same factorizations may differ. Hence the underlying graph structures of Bayesian networks and chain graph do not directly migrate to Bayesian hypergraphs.

\begin{figure}[htb]
	\begin{center}
	\resizebox{2.5cm}{!}{\begin{tikzpicture}
\tikzset{vertex/.style = {shape=circle,draw,minimum size=1.5em}}
\tikzset{edge/.style = {->,> = latex'}}



\node[vertex] (a) at (0,1)  {$a$};
\node[vertex] (b) at (2,1)  {$b$};
\node[vertex] (c) at (1,0)  {$c$};


\draw[edge] (a) to (c);
\draw[edge] (b) to (c);

\end{tikzpicture}}
	\end{center}
	\caption{A simple Bayesian hypergraph $\cH$.}
	\label{f:PC-GC}
\end{figure}
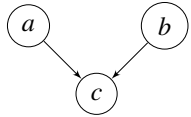

We will illustrate with an example. Consider the graph in Figure \ref{f:PC-GC}, which can be interpreted as a chain graph structure $G$ or a Bayesian hypergraph structure $\cH$. Note that the factorizations, under the two interpretations, are different. In particular, the factorization, according to $G$, is  
$$f_G(x) = f(x_a) f(x_b) \psi_{abc}(x)$$
for some non-negative functions $\psi_{abc}$. On the other hand, the factorization, according to $\cH$, is
$$f_{\cH}(x) = f(x_a) f(x_b) \psi_{ac}(x) \psi_{bc}(x)$$ for some non-negative functions $\psi_{ac}, \psi_{bc}.$

\end{remark}


\begin{figure}[htb]
	\begin{center}
		\begin{minipage}{.2\textwidth}
			\resizebox{2.5cm}{!}{\begin{tikzpicture}
\tikzset{vertex/.style = {shape=circle,draw,minimum size=1.5em}}
\tikzset{edge/.style = {->,> = latex'}}


\node[vertex] (a) at (0,1)  {$a$};
\node[vertex] (b) at (1,1)  {$b$};
\node[vertex] (c) at (0,0)  {$c$};
\node[vertex] (d) at (1,0)  {$d$};

\draw[edge] (a) to (c);
\draw[edge] (a) to (d);
\draw[edge] (b) to (c);
\draw[edge] (b) to (d);
\draw (c) to (d);

\end{tikzpicture}}
		\end{minipage}
		\hspace{1cm}
		\begin{minipage}{.2\textwidth}
			\resizebox{2.5cm}{!}{\begin{tikzpicture}
\tikzset{vertex/.style = {shape=circle,draw,minimum size=0.5mm}}
\tikzset{edge/.style = {->,> = latex'}}


\shadedraw[top color=white, bottom color=black, draw=black!10!white] (0,1)--(1,1)--(0,0)--cycle;
\shadedraw[top color=white, bottom color=black, draw=black!10!white] (0,1)--(1,1)--(1,0)--cycle;

\node[vertex] (a) at (0,1)  {$a$};
\node[vertex] (b) at (1,1)  {$b$};
\node[vertex] (c) at (0,0)  {$c$};
\node[vertex] (d) at (1,0)  {$d$}; 

\draw (c) to (d);

\end{tikzpicture}}
		\end{minipage}
		\caption{(1) a chain graph $G$; (2) a Bayesian hypergraph $\cH$.}
		\label{f:further-fact}
	\end{center} 
\end{figure}
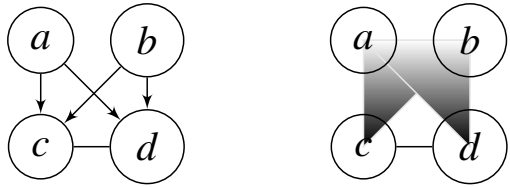

\begin{remark}\label{re:mem_save}
One of the key advantages of Bayesian hypergraphs is that they allow much finer factorizations of probability distributions compared to chain graph models. We will illustrate with a simple example in Figure \ref{f:further-fact}. Note that in Figure \ref{f:further-fact} (1), the factorization according to $G$ is 
\begin{align*} 
f(x) &= f(x_a) f(x_b) f(x_{cd} \ver x_{ab})\\
     &= f(x_a) f(x_b) \psi_{abcd}(x)
\end{align*}
In Figure \ref{f:further-fact} (2), the factorization according to $\cH$ is 
\begin{align*}
f(x) &= f(x_a) f(x_b) f(x_{cd} \ver x_{ab})\\
     &= f(x_a) f(x_b) \psi_{abc}(x) \psi_{abd}(x) \psi_{cd}(x)
\end{align*}

Note that although $G$ and $\cH$ have the same global Markov properties, the factorization according to $\cH$ is one step further compared to the factorization according to $G$. Suppose each of the variables of $\{a,b,c,d\}$ can take $k$ values. Then the factorization according to $G$ will require a conditional probability table of size $k^4$ while the factorization according to $\cH$ only needs a table of size $\Theta(k^3)$ asymptotically. Hence, a Bayesian hypergraph model allows much finer factorizations and thus achieves higher memory efficiency.
\end{remark}

\begin{remark}\label{rm:fact-all}
We remark that the factorization formula defined in \eqref{eq:fact} is in fact the most general possible in the sense that it allows all possible factorizations of a probability distribution admitted by a DAH. In particular, given a Bayesian hypergraph $\cH$ and one of its chain components $\tau$, the factorization scheme in \eqref{eq:fact} allows a distinct function for each maximal subset of $\tau \cup pa_{\cD}(\tau)$ that intersects $\tau$ ($pa_{\cD}$ is the parent of $\tau$ in the canonical DAG of $\cH$). For each subset $S$ of $\tau\cup pa_{\cD}(\tau)$ that does not intersect $\tau$, 
recall that the factorization in \eqref{eq:fact} can be rewritten as follows:
$$f(x_{\tau} \ver x_{pa(\tau)}) = \lp \ds\prod_{h \in \cM(\cH_{\tau}^*)} \psi_h(x) \rp \bigg/ \lp \ds\int_{\cX_{\tau}} \prod_{h \in \cM(\cH_{\tau}^*)} \psi_h(x) \mu_{\tau} (dx_{\tau})\rp.$$
Observe that $\psi_S(x)$ is a function that does not depend on values of variables in $\tau$. Hence $\psi_S(x)$ can be factored out from the integral above and cancels out with itself in $f(x_{\tau} \ver x_{pa(\tau)})$. Thus, the factorization formula in \eqref{eq:fact} or \eqref{eq:fact-alt} in fact allows distinct functions for all possible maximal subsets of $\tau \cup pa_{\cD}(\tau)$.

\end{remark}

\begin{table}[ht]
	\centering
	\begin{tabular}{| M{40mm} | M{30mm} | M{40mm} | M{30mm} | } 
		\hline
		
		\centering{Factorization} & \centering{BH representation} & \centering{Factorization} & \centering{BH representation} \tabularnewline
		\hline  
		
		\centering{$f(x) = f(x_a)f(x_b) \psi_{abc}(x)$} & \centering{\resizebox{2cm}{!}{\begin{tikzpicture}
\tikzset{vertex/.style = {shape=circle,draw,minimum size=1.5em}}
\tikzset{edge/.style = {->,> = latex'}}

\shadedraw[bottom color=black, top color=white, opacity=5, draw=black!10!white] (0,1)--(2,1)--(1,0)--cycle;

\node[vertex] (a) at (0,1)  {$a$};
\node[vertex] (b) at (2,1)  {$b$};
\node[vertex] (c) at (1,0)  {$c$};


\end{tikzpicture}}} & 
		\centering{$f(x) = f(x_{ab}) \psi_{abc}(x)$} & 
		\centering{\resizebox{2cm}{!}{\begin{tikzpicture}
\tikzset{vertex/.style = {shape=circle,draw,minimum size=1.5em}}
\tikzset{edge/.style = {->,> = latex'}}


\shadedraw[bottom color=black, top color=white, opacity=5, draw=black!10!white] (0,1)--(2,1)--(1,0)--cycle;

\node[vertex] (a) at (0,1)  {$a$};
\node[vertex] (b) at (2,1)  {$b$};
\node[vertex] (c) at (1,0)  {$c$};

\draw (a) to (b);

\end{tikzpicture}}} \tabularnewline
		\hline
		
		\centering{$f(x) = f(x_a)f(x_b) \psi_{ac}(x) \psi_{bc}(x)$} & \centering{\resizebox{2cm}{!}{\begin{tikzpicture}
\tikzset{vertex/.style = {shape=circle,draw,minimum size=1.5em}}
\tikzset{edge/.style = {->,> = latex'}}



\node[vertex] (a) at (0,1)  {$a$};
\node[vertex] (b) at (2,1)  {$b$};
\node[vertex] (c) at (1,0)  {$c$};


\draw[edge] (a) to (c);
\draw[edge] (b) to (c);

\end{tikzpicture}}} & 
		\centering{$f(x) = f(x_{ab}) \psi_{ac}(x) \psi_{bc}(x)$} & 
		\centering{\resizebox{2cm}{!}{\begin{tikzpicture}
\tikzset{vertex/.style = {shape=circle,draw,minimum size=1.5em}}
\tikzset{edge/.style = {->,> = latex'}}



\node[vertex] (a) at (0,1)  {$a$};
\node[vertex] (b) at (2,1)  {$b$};
\node[vertex] (c) at (1,0)  {$c$};

\draw (a) to (b);

\draw[edge] (a) to (c);
\draw[edge] (b) to (c);

\end{tikzpicture}}} \tabularnewline
		\hline
		
		
		\centering{$f(x) = f(x_a)f(x_b) \psi_{ac}(x)$} &
		\centering{\resizebox{2cm}{!}{\tikzset{
    >=stealth,
    font=\scriptsize,
    possible world/.style={circle,draw,thick,align=center},
    real world/.style={double,circle,draw,thick,align=center},
    minimum size=40pt
}

\begin{tikzpicture}
\tikzset{vertex/.style = {shape=circle,draw,minimum size=1.5em}}
\tikzset{edge/.style = {->,> = latex', thick}}



\node[vertex] (a) at (0,1)  {$a$};
\node[vertex] (b) at (2,1)  {$b$};
\node[vertex] (c) at (1,0)  {$c$};


\draw[edge] (a) to (c);

\end{tikzpicture}}} & 
		\centering{$f(x) = f(x_{ab}) \psi_{ac}(x)$} & 
		\centering{\resizebox{2cm}{!}{\tikzset{
    >=stealth,
    font=\scriptsize,
    possible world/.style={circle,draw,thick,align=center},
    real world/.style={double,circle,draw,thick,align=center},
    minimum size=40pt
}

\begin{tikzpicture}
\tikzset{vertex/.style = {shape=circle,draw,minimum size=1.5em}}
\tikzset{edge/.style = {->,> = latex', thick}}



\node[vertex] (a) at (0,1)  {$a$};
\node[vertex] (b) at (2,1)  {$b$};
\node[vertex] (c) at (1,0)  {$c$};

\draw (a) to (b);

\draw[edge] (a) to (c);

\end{tikzpicture}}} \tabularnewline
		\hline
		
		\centering{$f(x) = f(x_a)f(x_b) f(x_c)$} &
		\centering{\resizebox{2cm}{!}{\tikzset{
    >=stealth,
    font=\scriptsize,
    possible world/.style={circle,draw,thick,align=center},
    real world/.style={double,circle,draw,thick,align=center},
    minimum size=40pt
}

\begin{tikzpicture}
\tikzset{vertex/.style = {shape=circle,draw,minimum size=1.5em}}
\tikzset{edge/.style = {->,> = latex', thick}}



\node[vertex] (a) at (0,1)  {$a$};
\node[vertex] (b) at (2,1)  {$b$};
\node[vertex] (c) at (1,0)  {$c$};


\end{tikzpicture}}} & 
		\centering{$f(x) = f(x_{ab})f(x_c)$} & 
		\centering{\resizebox{2cm}{!}{\tikzset{
    >=stealth,
    font=\scriptsize,
    possible world/.style={circle,draw,thick,align=center},
    real world/.style={double,circle,draw,thick,align=center},
    minimum size=40pt
}

\begin{tikzpicture}
\tikzset{vertex/.style = {shape=circle,draw,minimum size=1.5em}}
\tikzset{edge/.style = {->,> = latex', thick}}



\node[vertex] (a) at (0,1)  {$a$};
\node[vertex] (b) at (2,1)  {$b$};
\node[vertex] (c) at (1,0)  {$c$};

\draw (a) to (b);

\end{tikzpicture}}} \tabularnewline
		\hline
		
		\centering{$f(x) = \psi_{ac}(x)\psi_{bc}(x)$} &
		\centering{\resizebox{2cm}{!}{\begin{tikzpicture}
\tikzset{vertex/.style = {shape=circle,draw,minimum size=1.5em}}
\tikzset{edge/.style = {->,> = latex'}}



\node[vertex] (a) at (0,1)  {$a$};
\node[vertex] (b) at (2,1)  {$b$};
\node[vertex] (c) at (1,0)  {$c$};

\draw (a) to (c);
\draw (b) to (c);


\end{tikzpicture}}} & 
		\centering{$f(x) = f(x_c)\psi_{ac}(x)\psi_{ab}(x)$} & 
		\centering{\resizebox{2cm}{!}{\begin{tikzpicture}
\tikzset{vertex/.style = {shape=circle,draw,minimum size=1.5em}}
\tikzset{edge/.style = {->,> = latex'}}



\node[vertex] (a) at (0,1)  {$a$};
\node[vertex] (b) at (2,1)  {$b$};
\node[vertex] (c) at (1,0)  {$c$};

\draw (a) to (b);

\draw[edge] (c) to (a);

\end{tikzpicture}}} \tabularnewline
		\hline
		
		\centering{$f(x) =f(x_c) \psi_{ac}(x) \psi_{bc}(x)$} &
		\centering{\resizebox{2cm}{!}{\begin{tikzpicture}
\tikzset{vertex/.style = {shape=circle,draw,minimum size=1.5em}}
\tikzset{edge/.style = {->,> = latex'}}



\node[vertex] (a) at (0,1)  {$a$};
\node[vertex] (b) at (2,1)  {$b$};
\node[vertex] (c) at (1,0)  {$c$};


\draw[edge] (c) to (a);
\draw[edge] (c) to (b);

\end{tikzpicture}}} & 
		\centering{$f(x) = f(x_c)\psi_{ab}(x)\psi_{ac}(x)\psi_{bc}(x)$} & 
		\centering{\resizebox{2cm}{!}{\begin{tikzpicture}
\tikzset{vertex/.style = {shape=circle,draw,minimum size=1.5em}}
\tikzset{edge/.style = {->,> = latex'}}



\node[vertex] (a) at (0,1)  {$a$};
\node[vertex] (b) at (2,1)  {$b$};
\node[vertex] (c) at (1,0)  {$c$};

\draw (a) to (b);

\draw[edge] (c) to (a);
\draw[edge] (c) to (b);

\end{tikzpicture}}} \tabularnewline
		\hline

	\end{tabular}
	\caption{Factorizations and corresponding BH representations}\label{t3}
\end{table}

Table \ref{t3} lists some factorizations of three random variables and the corresponding BH representation. Entry 1 (top left) corresponds to a three-node Bayesian network: an uncoupled converging connection (unshielded collider) at $c$. Entry 3 (below entry 1) corresponds to a three-node Bayesian network like the one in entry 1, with the constraint that the conditional probability table factorizes as, for example, in a Noisy-OR functional dependence and, more generally, in a situation for which compositionality holds, such as MIN, MAX, or probabilistic sum \cite{Pearl88, Hajek92, JN07}.  Graphical modeling languages should capture assumptions graphically in a transparent and explicit way, as opposed to hiding them in tables or functions.  By this criterion, the Bayesian hypergraph of entry 3 shows the increased power of our new PGM with respect to Bayesian networks and chain graphs.

	\begin{figure}[ht]
		\centering
		\includegraphics[width=11cm]{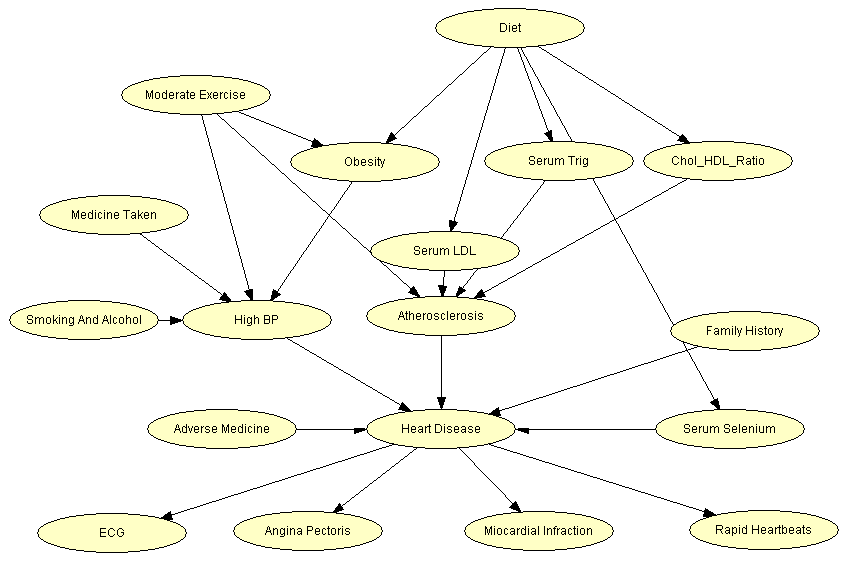}
		\caption{A model of heart disease} \label{Fig:heart}
	\end{figure}

For a detailed example of Noisy-OR functional dependence, consider the (much simplified) heart disease model of \cite{GV00}, shown in Figure  \ref{Fig:heart}, and the family of nodes Obesity (O, with values Yes, No), Diet (D, with values Bad, Good), and Moderate Exercise (M, with values Yes, No).  The Noisy-OR model is used to compute the conditional probability of O given M and D. Good diet prevents obesity, except when an inhibiting mechanism prevents that with probability $q_{D \to O}$; moderate exercise prevents obesity except when an inhibiting mechanism prevents that with probability $q_{M \to O}$. The inhibiting mechanisms are independent, and therefore the probability of being obese given both a good diet and moderate exercise is $1 - q_{D \to O} q_{M \to O}$.  Equivalently, the probability of not being obese given both a good diet and moderate exercise is $q_{D \to O} q_{M \to O}$. If we consider a situation with only the variables just described, the joint probability of Diet, Moderate Exercise, and Obesity factorizes exactly as in the Bayesian hypergraph of entry 3, with the caution that only half of the entries in the joint probability table are computed directly; the others are computed by the complement to one. 

	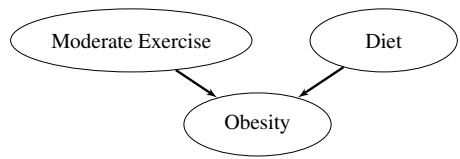
\begin{figure}[htb]
	\begin{center}
			\resizebox{6cm}{!}{\begin{tikzpicture}
\tikzset{vertex/.style = {shape=circle,draw,minimum size=1.5em}}
\tikzset{tuoyuan/.style= {shape=ellipse,draw, minimum width=3.5cm, minimum height=1.5cm}}
\tikzset{edge/.style = {->,> = latex',ultra thick}}

\node[tuoyuan] (a) at (0,2)  {\Large Moderate Exercise};
\node[tuoyuan] (b) at (6,2)  {\Large Diet};
\node[tuoyuan] (c) at (3,0)  {\Large Obesity};

\draw[edge] (a) to (c);
\draw[edge] (b) to (c);

\end{tikzpicture}}
		\caption{An example of Noisy-OR: obesity.}
		\label{f:obesity}
	\end{center} 
\end{figure}

Similarly, entry 2 corresponds to a three node chain graph, while entry 4 may be used to model a situation in which variables $a$ and $b$ are related by being effects of a common latent cause, while the mechanisms by which they, in turn, affect variable $c$ are causally independent.  While such a situation may be unusual, it is notable that it can be represented graphically in Bayesian hypergraphs. Therefore, the Bayesian hypergraph of entry 4 shows the increased power of our new PGM with respect to Bayesian networks and chain graphs.

	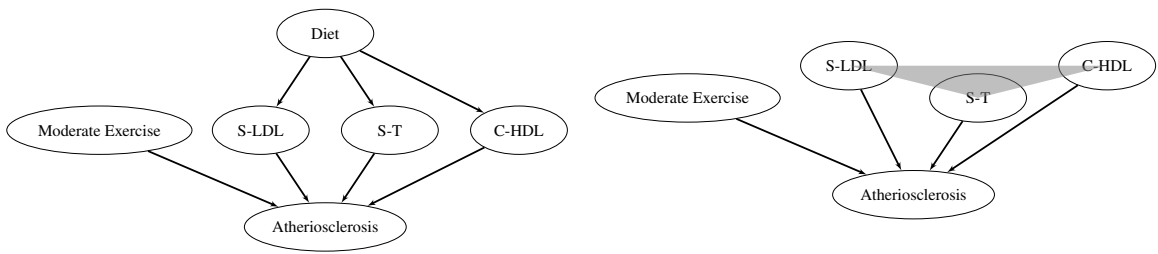
\begin{figure}[htb]
	
		\begin{minipage}{.5\textwidth}
		\begin{center}
			\resizebox{7.5cm}{!}{\begin{tikzpicture}
\tikzset{vertex/.style = {shape=circle,draw,minimum size=1.5em}}
\tikzset{tuoyuan/.style= {shape=ellipse,draw, minimum width=3cm, minimum height=1.5cm}}
\tikzset{edge/.style = {->,> = latex',ultra thick}}



\node[tuoyuan] (a) at (-1,3)  {\Large Moderate Exercise};
\node[tuoyuan] (b) at (4,3)  {\Large S-LDL};
\node[tuoyuan] (c) at (8,3)  {\Large S-T};
\node[tuoyuan] (d) at (12,3)  {\Large C-HDL};
\node[tuoyuan] (e) at (6,0)  {\Large Atheriosclerosis};
\node[tuoyuan] (f) at (6,6)  {\Large Diet};

\draw[edge] (a) to (e);
\draw[edge] (b) to (e);
\draw[edge] (c) to (e);
\draw[edge] (d) to (e);
\draw[edge] (f) to (b);
\draw[edge] (f) to (c);
\draw[edge] (f) to (d);

\end{tikzpicture}}
		\end{center}
		\end{minipage}
		\begin{minipage}{.5\textwidth}
		\begin{center}
		   \resizebox{7.5cm}{!}{\begin{tikzpicture}
\tikzset{vertex/.style = {shape=circle,draw,minimum size=1.5em}}
\tikzset{tuoyuan/.style= {shape=ellipse,draw, minimum width=3cm, minimum height=1.5cm}}
\tikzset{edge/.style = {->,> = latex',ultra thick}}

\node[tuoyuan] (a) at (-1,3)  {\Large Moderate Exercise};
\node[tuoyuan] (b) at (4,4)  {\Large S-LDL};
\node[tuoyuan] (c) at (8,3)  {\Large S-T};
\node[tuoyuan] (d) at (12,4)  {\Large C-HDL};
\node[tuoyuan] (e) at (6,0)  {\Large Atheriosclerosis};

\draw[edge] (a) to (e);
\draw[edge] (b) to (e);
\draw[edge] (c) to (e);
\draw[edge] (d) to (e);

\fill[gray, opacity=0.5] (4,4)--(8,3)--(12,4)--cycle;

\end{tikzpicture}}
		\end{center}
		\end{minipage}
		\caption{An example of Noisy-OR: Atheriosclerosis.}
		\label{f:Ather}
	 
\end{figure}

For a detailed example, consider again the model shown in Figure  \ref{Fig:heart} and, this time, the structure in which Moderate Exercise, Serum LDL (S-LDL), Serum Triglicerides (S-T), and Cholesterol HDL (C-HDL) Ratio are parents (possible causes) of Atheriosclerosis, and Diet is a parent of S-LDL, S-T, and C-HDL.  As in the previous example, the Noisy-OR assumption is made, and therefore, after marginalization of Diet, the computation of the joint probability of Moderate Exercise, S-LDL, S-T, C-HDL, and Atheriosclerosis factorizes as in an entry 4, with a slight generalization due to the presence of four parents instead of two.  As in entry 4, the parents (causes) are not marginally independent, due to their common dependence on Diet, but the conditional probability of the effect decomposes multiplicatively.

Moreover, as illustrated in Remark \ref{re:mem_save} and Table \ref{t3}, a Bayesian hypergraph enables much finer factorization than a chain graph. In the factorization w.r.t. a chain graph $G$ with chain components $\{\tau: \tau\in \cD\}$, $f(x_{\tau} \ver x_{pa(\tau)})$ is only allowed to be further factorized based on the maximal cliques in the moral graph of $G_{\tau\cup pa(\tau)}$, which is rather restrictive. In comparison, a Bayesian hypergraph $\cH$ allows factorization based on the maximal elements in all subsets of the power set of $\tau \cup pa_{\cD}(x)$. Finer factorizations have the advantage of memory saving in terms of the size of the probability table required. Moreover, factorizations according to Bayesian hypergraphs can be obtained directly from reading off the hyperedges instead of having to search for all maximal cliques in the moral graph (in the chain graph's case). Hence, Bayesian hypergraphs enjoy an advantage in heuristic adequacy as well as representational adequacy.

Next, we investigate the relationship between the factorization property and the Markov properties of Bayesian hypergraphs.
\begin{proposition}\label{prop:fact-Markov}
Let $P$ be a probability measure with density $f$ that factorizes according to a DAH $\cH$. Then
$$\hfc \implies \hgc \implies \hlc \implies \hpc.$$
\end{proposition}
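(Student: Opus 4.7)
The second and third implications $\hgc \implies \hlc \implies \hpc$ are already delivered by Corollary \ref{cor:Markov_easy}, so the whole proof reduces to establishing $\hfc \implies \hgc$. The plan is to pass to the shadow $G = \partial(\cH)$, which is a chain graph by Proposition \ref{prop:shadow-lem}(i), show that $P$ then factorizes according to $G$ as a chain graph (i.e.\ satisfies $\cfc$), invoke the known chain-graph implication $\cfc \implies \cgc$, and finally translate $\cgc$ for $G$ back into $\hgc$ for $\cH$ via Remark \ref{rk:Markov-equiv}.

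To set up the reduction, observe first that $\cH$ and its shadow $G$ have the same chain components: by Proposition \ref{prop:shadow-lem}(ii), $nb_G(v) = nb_{\cH}(v)$ for every $v$, so the co-head equivalence classes match. Hence the ``outer'' directed-type factorization
\[
f(x) = \prod_{\tau \in \cD} f(x_{\tau} \ver x_{pa(\tau)})
\]
is literally the same in $\hfc$ and $\cfc$, and the only remaining task is, for each chain component $\tau$, to re-express the hyperedge product in \eqref{eq:fact} as a product of clique potentials of the moral graph $(G_{\tau \cup pa(\tau)})^m$.

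The crux is the following claim: for each $h \in \cM(\cH_{\tau}^*)$, the set $T(h) \cup H(h)$ is a clique in $(G_{\tau \cup pa(\tau)})^m$. Indeed, all vertices in $H(h)$ are pairwise co-head in $\cH$, hence pairwise neighbors in $G$; every vertex of $T(h) \subseteq pa(\tau)$ is a parent of every vertex of $H(h)$, so in $G$ each tail-to-head pair is joined by a directed edge which becomes undirected in $(G_{\tau \cup pa(\tau)})^m$; and moralization makes $bd(\tau)$ complete, which in particular connects every pair of vertices within $T(h)$. Choosing, for each $h$, some maximal clique $C_h$ of $(G_{\tau \cup pa(\tau)})^m$ containing $T(h) \cup H(h)$, and setting
\[
\phi_C(x) = \prod_{\substack{h \in \cM(\cH_{\tau}^*) \\ C_h = C}} \psi_h(x)
\]
(with the empty product equal to $1$) for each maximal clique $C$, yields
\[
f(x_{\tau} \ver x_{pa(\tau)}) = Z^{-1}(x_{pa(\tau)}) \prod_{h \in \cM(\cH_{\tau}^*)} \psi_h(x) = Z^{-1}(x_{pa(\tau)}) \prod_{C} \phi_C(x),
\]
which is precisely the inner clause of $\cfc$ for the chain graph $G$.

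Combining the outer and inner factorizations, $P$ satisfies $\cfc$ relative to $G$; the known implication $\cfc \implies \cgc$ for chain graphs then gives $\cgc$ for $G$, and Remark \ref{rk:Markov-equiv} identifies this with $\hgc$ for $\cH$, completing the chain of implications. The main obstacle is the clique claim above — in particular, ensuring pairs of tail vertices end up adjacent in the moral graph; this is what forces the use of moralization of $bd(\tau)$ rather than any direct ``co-tail'' structure from $\cH$, and without it the translation from hyperedge potentials to clique potentials would fail.
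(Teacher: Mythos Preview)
Your proof is correct, but it takes a genuinely different route from the paper's. The paper argues $\hfc\Rightarrow\hgc$ \emph{directly}: given $A,B,C$ with $C$ separating $A$ from $B$ in $\bigl(\partial(\cH_{ant(A\cup B\cup C)})\bigr)^m$, it lets $\tilde A$ be the union of the components of $A$ in that moral graph minus $C$, observes that each hyperedge $h$ becomes a clique under moralization and hence $T(h)\cup H(h)$ lies entirely in $\tilde A\cup C$ or in its complement $\tilde B\cup C$, and splits the product in \eqref{eq:fact} accordingly to obtain $f_{ant(A\cup B\cup C)}(x)=\phi_1(x_{\tilde A\cup C})\phi_2(x_{\tilde B\cup C})$, from which $A\Perp B\ver C$ follows by decomposition.

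Your argument instead first shows that $\hfc$ for $\cH$ implies $\cfc$ for the shadow $G=\partial(\cH)$ by embedding each hyperedge potential into a maximal-clique potential of $(G_{\tau\cup pa(\tau)})^m$, then quotes the known chain-graph implication $\cfc\Rightarrow\cgc$, and finally transfers $\cgc$ back to $\hgc$ via Remark~\ref{rk:Markov-equiv}. Both proofs ultimately rest on the same geometric fact you isolate as the ``crux'': every hyperedge $h\in\cH_\tau^*$, viewed as a vertex set, is a clique in the relevant moral graph (with the tail--tail adjacencies supplied by moralization of $bd(\tau)$). The paper exploits this fact to split the factorization in place; you exploit it to recycle the chain-graph theory wholesale. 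Your route has the side benefit of making explicit that $\hfc$ is at least as strong as $\cfc$ for the shadow, which is of independent interest; the paper's route is self-contained and avoids the detour through chain-graph factorization.
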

\begin{proof}
It suffices to show $\hfc \implies \hgc$ since the other implications are proven in Corollary \ref{cor:Markov_easy}. Let $A,B,C \subseteq V(\cH)$ such that $C$ separates $A$ and $B$ in $G = \lp\partial(\cH_{ant(A\cup B\cup C)})\rp^m$. Let $\tilde{A}$ be the connectivity components in $G\bs C$ containing $A$ and let $\tilde{B} = V \bs (\tilde{A} \cup C)$. Note that in $\lp\partial(\cH_{ant(A\cup B\cup C)})\rp^m$, every hyperedge $h = (T,H)$ becomes a complete graph on the vertex set $T\cup H$ because of moralization.
Observe that since $C$ separates $A$ and $B$ in $G$, for every hyperedge $h = (T,H)$, $T\cup H$ is either a subset of $\tilde{A}\cup C$ or $\tilde{B} \cup C$. Let $\cH' = \cH_{ant(A\cup B\cup C)}$ and $\{\tau:\tau \in \cD'\}$ be the chain components of $\cH'$. For each $\tau \in \cD'$, define $\cH_{\tau}^*$ to be the subhypergraph of $\cH'_{\tau \cup pa(\tau)}$ containing all edges $h$ in $\cH'_{\tau \cup pa(\tau)}$ such that $H(h) \subseteq \tau$. 
We then obtain from the $\hfc$ property that 
\begin{align*}
    f_{\cH'}(x) &= \ds\prod_{\tau \in \cD'} \ds\prod_{h \in \cM(\cH_{\tau}^*)} \psi_h(x).  \\
               &=  \phi_1 (x_{\tilde{A} \cup C}) \phi_2(x_{\tilde{B}\cup C}).
\end{align*} 
for some non-negative functions $\phi_1, \phi_2$. By integrating over the chain components not in $ant(A\cup B\cup C)$, it follows that 
$$f(x) = \psi_1 (x_{\tilde{A} \cup C}) \psi_2(x_{\tilde{B}\cup C}).$$for some non-negative functions $\psi_1, \psi_2$. Hence, we have that
$$\tilde{A}\Perp \tilde{B} \ver C.$$
By (S2: Decomposition) property of conditional independences, it follows that $A\Perp B \ver C$.
\end{proof}

\begin{remark}
Due to the generality of factorizations according to Bayesian hypergraphs, the reverse direction of the implication $\hfc \implies \hgc$ in Proposition \ref{prop:fact-Markov} is generally not true. We will illustrate with the following example.

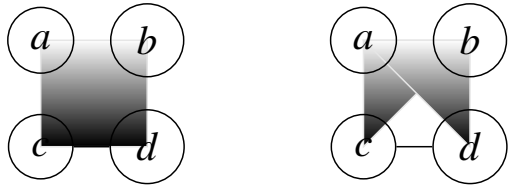
\begin{figure}[htb]
	\begin{center}
		\begin{minipage}{.2\textwidth}
			\resizebox{2.5cm}{!}{\begin{tikzpicture}
\tikzset{vertex/.style = {shape=circle,draw,minimum size=0.5mm}}
\tikzset{edge/.style = {->,> = latex'}}

\shadedraw[top color=white, bottom color=black, draw=black!10!white] (0,1)--(1,1)--(1,0)--(0,0)--cycle;

\node[vertex] (a) at (0,1)  {$a$};
\node[vertex] (b) at (1,1)  {$b$};
\node[vertex] (c) at (0,0)  {$c$};
\node[vertex] (d) at (1,0)  {$d$}; 

\draw (c) to (d);

\end{tikzpicture}}
		\end{minipage}
		\hspace{1cm}
		\begin{minipage}{.2\textwidth}
			\resizebox{2.5cm}{!}{\begin{tikzpicture}
\tikzset{vertex/.style = {shape=circle,draw,minimum size=0.5mm}}
\tikzset{edge/.style = {->,> = latex'}}


\shadedraw[top color=white, bottom color=black, draw=black!10!white] (0,1)--(1,1)--(0,0)--cycle;
\shadedraw[top color=white, bottom color=black, draw=black!10!white] (0,1)--(1,1)--(1,0)--cycle;

\node[vertex] (a) at (0,1)  {$a$};
\node[vertex] (b) at (1,1)  {$b$};
\node[vertex] (c) at (0,0)  {$c$};
\node[vertex] (d) at (1,0)  {$d$}; 

\draw (c) to (d);

\end{tikzpicture}}
		\end{minipage}
		\caption{Two Bayesian hypergraphs $\cH_1$ (left), $\cH_2$ (right) with the same global Markov properties but different forms of factorizations.}
		\label{f:FC-PC}
	\end{center} 
\end{figure}
Consider the two Bayesian hypergraphs $\cH_1$ and $\cH_2$ in Figure \ref{f:FC-PC}. Note that they have the same global Markov properties since the shadows of $\cH_1$ and $\cH_2$ are the same. However the factorizations according to $\cH_1$ and $\cH_2$ are different. If we let $f_1, f_2$ denote the factorizations represented by $\cH_1$ and $\cH_2$, then 
\begin{align*} 
f_1(x) &= f_1(x_a) f_1(x_b) f_1(x_{cd} \ver x_{ab})\\
     &= f_1(x_a) f_1(x_b) \psi_{abcd}(x)
\end{align*}
while
\begin{align*}
f_2(x) &= f_2(x_a) f_2(x_b) f_2(x_{cd} \ver x_{ab})\\
     &= f_2(x_a) f_2(x_b) \psi_{abc}(x) \psi_{abd}(x) \psi_{cd}(x)
\end{align*}
This shows that $\hgc$ does not generally imply $\hfc$. \end{remark}

\begin{remark}\label{counting}
We give another combinatorial argument on why in general $\hfc$ does not imply $\hgc$. We claim that the number of possible forms of factorizations admitted by Bayesian hypergraphs is much more than the number of conditional independence statements over the same set of variables. First, observe that the number of conditional independence statements on $n$ variables is upper bounded by the number of ways to partition $n$ elements into four disjoint sets $A,B,C,D$. Each such partition induces a conditional statement $A \Perp B \ver C$ and $D$ is the set of unused variables. There are $4^n$ ways to partition $n$ distinct elements into four ordered pairwise disjoint sets. Hence there are at most $4^n$ conditional independence statements on $n$ variables.

On the other hand, we give a simple lower bound on the number of directed acyclic hypergraphs by simply counting the number of directed acyclic hypergraphs $\cH$ whose vertex sets can be partitioned into two sets $A,B$ such that $|A|=|B|= n/2$ and every fully directed edge has its tail only from $A$ and its head only from $B$. Observe that there are $2^{n/2}$ subsets of $A$ and $B$ respectively. By Sperner's theorem \cite{Sperner}, the largest number of subsets of $A$ none of which contain any other is upper bounded by $\binom{n/2}{n/4}$. The same holds for $B$. Hence there are at least $\binom{n/2}{n/4}^2$ possible directed hyperedges such that when viewed as undirected hyperedge, no edge contains any other as subset. Therefore, there are at least $$2^{\binom{n/2}{n/4}^2} = \Theta\lp 2^{\frac{2^{n+2}}{\pi n}}\rp$$
distinct factorizations admitted by DAHs whose directed edges have their tails only from $A$ and their heads only from $B$. Note that this number is much less than the total number of distinct factorizations admitted by DAHs, but is already much bigger than $4^n$, which is the upper bound on the number of conditional independence statements on $n$ variables. Hence, there are many more factorizations allowed by Bayesian hypergraphs than the number of conditional independence statements on $n$ variables, which suggest that $\hgc$ does not imply $\hfc$ in general.
\end{remark}

\subsection{Comparison between LWF chain graph and LWF Bayesian hypergraph}

\begin{theorem}\label{thm:HG-equiv}
Let $G$ be a chain graph and $\cH$ be its canonical (LWF) DAH. We show that a probability measure $P$ satisfies the following:
\begin{enumerate}[(i)]

\item\label{p-equiv} $P$ is pairwise $G$-Markovian if and only if $P$ is pairwise $\cH$-Markovian.
\item\label{l-equiv} $P$ is local $G$-Markovian if and only if $P$ is local $\cH$-Markovian.
\item\label{g-equiv} $P$ is global $G$-Markovian if and only if $P$ is global $\cH$-Markovian.

\end{enumerate}
\end{theorem}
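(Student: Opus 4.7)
The plan is to derive this theorem as essentially an immediate corollary of Lemma~\ref{lem:comp}, together with the observation recorded in Remark~\ref{rk:Markov-equiv}. Since $G$ is the shadow of $\cH$ by Lemma~\ref{lem:comp}(\ref{comp2}), it suffices to check that for each of the three notions, the defining conditional independence statement with respect to $G$ is literally identical to the defining statement with respect to $\cH$; no appeal to probability theory is needed beyond that.

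For parts (\ref{p-equiv}) and (\ref{l-equiv}), I would simply read the definitions of $\cpc$/$\hpc$ and $\clc$/$\hlc$ side by side and apply Lemma~\ref{lem:comp}(\ref{comp1}), which yields $nb_G(v) = nb_{\cH}(v)$ and $pa_G(v) = pa_{\cH}(v)$ for every vertex $v$. These equalities immediately give $bd_G(v) = bd_{\cH}(v)$, $cl_G(v) = cl_{\cH}(v)$, the same adjacency relation, and, since directed paths are determined entirely by the parent relation, the same sets $de_G(v) = de_{\cH}(v)$ and $nd_G(v) = nd_{\cH}(v)$. Plugging these equalities into the definitions shows that the pairwise (respectively local) $G$-Markov condition is word-for-word the pairwise (respectively local) $\cH$-Markov condition.

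Part~(\ref{g-equiv}) requires one small additional step: the commutation of the shadow with the induced sub-hypergraph operation on anterior sets, i.e.\ $\partial(\cH_S) = G[S]$ whenever $S$ is an anterior set of $\cH$. Because $nb_G = nb_{\cH}$ and $pa_G = pa_{\cH}$, the anterior relation in $G$ and in $\cH$ coincide, so the sets $ant(A \cup B \cup C)$ agree; moreover any anterior set $S$ is ancestral, so $bd(S) = \emptyset$, meaning $S$ is closed under taking parents and neighbors. Hence if $h \in E(\cH)$ has some $v \in H(h) \cap S$, then $T(h) \subseteq pa(v) \subseteq S$ and $H(h)\setminus\{v\} \subseteq nb(v) \subseteq S$, so $T(h) \cup H(h) \subseteq S$ and $h \in E(\cH_S)$. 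Thus every edge of $G[S]$, which comes from applying the shadow construction to some such $h$, is also produced by $\partial(\cH_S)$; the reverse inclusion is immediate since $\cH_S \subseteq \cH$ and $\partial(\cH) = G$. Taking $S = ant(A \cup B \cup C)$ then gives $(\partial(\cH_{ant(A \cup B \cup C)}))^m = (G_{ant(A \cup B \cup C)})^m$, so the separation criteria of $\hgc$ and $\cgc$ coincide.

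The main, and only, technical step is the commutation $\partial(\cH_S) = G[S]$ on anterior sets verified above; once it is in hand, all three equivalences fall out formally, and I do not anticipate any genuine obstacle beyond this mild bookkeeping.
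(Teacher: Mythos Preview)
Your proposal is correct and follows essentially the same approach as the paper's proof: both derive (\ref{p-equiv}) and (\ref{l-equiv}) directly from Lemma~\ref{lem:comp}(\ref{comp1}) via the induced equalities $bd_G=bd_{\cH}$, $cl_G=cl_{\cH}$, $nd_G=nd_{\cH}$, and both reduce (\ref{g-equiv}) to the identity $G_{ant(A\cup B\cup C)}=\partial(\cH_{ant(A\cup B\cup C)})$. The paper merely asserts this identity, whereas you actually verify it by showing that an ancestral set $S$ absorbs any hyperedge whose head meets $S$; this extra bookkeeping is a welcome elaboration rather than a different method.
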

\begin{proof}
By Lemma \ref{lem:comp}, $nb_G(v) = nb_{\cH}(v)$, $pa_G(v) = pa_{\cH}(v)$. Hence the same equality holds for $nd_G(v),bd_G(v),cl_G(v)$, which gives us \ref{p-equiv} and \ref{l-equiv} by definition of the Markov properties.
\ref{g-equiv} results from the fact that for all $A,B,C\subseteq V(G)= V(\cH)$, $G_{ant(A\cup B\cup C)} = \partial(\cH_{ant(A\cup B\cup C)})$.
\end{proof}

\begin{theorem}\label{thm:factor-equiv}
Let $G$ be a chain graph and $\cH$ be its canonical LWF DAH. Then a probability measure $P$ with density $f$ factorizes according to $G$ if and only if $f$ factorizes according to $\cH$.
\end{theorem}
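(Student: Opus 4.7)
The plan is to reduce the proof to a combinatorial correspondence between $\cM(\cH^*_\tau)$ and the maximal cliques of the moral graph $G^m_{\tau\cup pa(\tau)}$, exploiting Lemma~\ref{lem:comp}. By that lemma, $nb_G = nb_{\cH}$ and $pa_G = pa_{\cH}$ vertex by vertex, so $G$ and $\cH$ have the same chain components $\{\tau : \tau \in \cD\}$ and the same parent sets $pa(\tau)$. Consequently the outer ``directed acyclic'' layer $f(x) = \prod_{\tau \in \cD} f(x_\tau \mid x_{pa(\tau)})$ appearing in part~(i) of each factorization definition is the same on both sides. The theorem thus reduces, chain component by chain component, to showing that the chain-graph factorization of $f(x_\tau \mid x_{pa(\tau)})$ over the maximal cliques of $G^m_{\tau\cup pa(\tau)}$ coincides with the hypergraph factorization over $\cM(\cH^*_\tau)$.

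The key combinatorial claim I would establish is that $\{T(h) \cup H(h) : h \in \cM(\cH^*_\tau)\}$ equals the set of maximal cliques of $G^m_{\tau\cup pa(\tau)}$ whose intersection with $\tau$ is nonempty. First, by a direct analysis of the construction of Section~\ref{sec:hypergraph-extension}, every edge of $\cH^*_\tau$ has the form $(P_B, B)$, where $B$ is a nonempty clique of $G[\tau]$, $P_B = \{v \in pa(\tau) : v \to u \text{ for all } u \in B\}$ is the set of common parents, and $B$ arises as an intersection of heads of Phase~I edges whose heads sit in $\tau$. For the forward inclusion: given a maximal clique $K^*$ of the moral graph with $B := K^* \cap \tau \neq \emptyset$ and $P_B := K^* \cap pa(\tau)$, pick for each $v \in P_B$ a maximal clique $K_v$ in the undirected part of $G[\text{children}(v)]$ with $B \subseteq K_v$; the moral-graph maximality of $K^*$ forces $\bigcap_{v \in P_B} K_v = B$, so $(P_B, B) \in \cH^*_\tau$, and a direct subset comparison with other candidate edges $(P_{B'}, B')$ shows it lies in $\cM(\cH^*_\tau)$. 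For the reverse inclusion: given $(P_B, B) \in \cM(\cH^*_\tau)$, let $\tilde B$ be the largest clique of $G[\tau]$ containing $B$ for which every vertex of $P_B$ remains a parent of every vertex of $\tilde B$; the same intersection construction produces a Phase~II edge $(P_B, \tilde B) \in \cH^*_\tau$, so maximality of $(P_B, B)$ in $\cH^*_\tau$ forces $\tilde B = B$, which is precisely the statement that $B \cup P_B$ is maximal in $G^m_{\tau\cup pa(\tau)}$.

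Once this bijection is in hand, the maximal cliques of $G^m_{\tau\cup pa(\tau)}$ that lie entirely inside $pa(\tau)$ contribute factors independent of $x_\tau$, which can be absorbed into the normalizer $Z^{-1}(x_{pa(\tau)})$, while the remaining maximal cliques are exactly matched with the edges of $\cM(\cH^*_\tau)$. Both factorizations then express $f(x_\tau \mid x_{pa(\tau)})$ as a normalizer depending only on $x_{pa(\tau)}$ times a product of nonnegative potentials indexed by the same family of vertex subsets of $\tau \cup pa(\tau)$; either implication follows by reading off and relabelling the potentials. The hard part will be the reverse inclusion of the combinatorial claim, because maximality in $\cM(\cH^*_\tau)$ is defined via Phase~II edges whereas maximality in the moral graph is purely clique-theoretic; the bridging observation is that intersections of Phase~I heads sweep out exactly the cliques of $G[\tau]$ extendable to bicliques with a prescribed common parent set $P_B$, so that the Phase~II construction faithfully realizes the moral-graph maximality relation.
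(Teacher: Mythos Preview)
Your approach is essentially the same as the paper's: use Lemma~\ref{lem:comp} to identify the chain components, and then reduce the equivalence of the inner factorizations to a combinatorial bijection between $\cM(\cH^*_\tau)$ and the maximal cliques of $(G_{\tau\cup pa(\tau)})^m$, established via the Phase~I/Phase~II construction. Your argument is in fact slightly more careful than the paper's in one respect: you correctly restrict the bijection to those maximal cliques of the moral graph that meet $\tau$, and handle any maximal clique lying entirely in $pa(\tau)$ by absorbing its factor into the normalizer $Z^{-1}(x_{pa(\tau)})$; the paper asserts a full bijection with \emph{all} maximal cliques, which can fail (e.g.\ when no vertex of $\tau$ has every element of $pa(\tau)$ as a parent, $pa(\tau)$ itself is a maximal clique of the moral graph with no counterpart in $\cM(\cH^*_\tau)$), though this does not affect the factorization for exactly the normalizer reason you give.
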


\begin{proof}
Note that by Lemma \ref{lem:comp}, $G$ and $\cH$ have the same set of chain components $\{\tau: \tau\in \cD\}$. It suffices to show for every $\tau \in \cD$, there exists a bijective map $\phi$ from the set of maximal edges in $(\cH^*_{\tau})^u$ to the set of maximal cliques in $(G_{\tau \cup pa(\tau)})^m$ such that for each maximal edge $h$ in $(\cH^*_{\tau})^u$, $\phi(h) = h$.
For ease of reference, let $\cH' = (\cH^*_{\tau})^u$ and let $G' = (G_{\tau \cup pa(\tau)})^m$. Define $\phi(h) = h$. We need to show two things: (1) for every maximal edge $h$ in $\cH'$, $h$ induces a maximal clique in $G'$; (2) for every maximal clique $h$ in $G'$, $h$ is a maximal edge in $\cH'$. 

We first show (1). Suppose that $h$ is a maximal edge in $\cH'$. Clearly, $h$ induces a clique in $G'$ because of the moralization. Suppose for the sake of contradiction that $h$ is not maximal in $G'$, i.e. there is a maximal clique $h'$ in $G'$ such that $h\subsetneq h'$. Let $h' = A\cup B$ where $A \subseteq pa(\tau)$ and $B \subseteq \tau$. There are two cases:
\begin{description}
\item Case 1: $A = \emptyset$ or $B = \emptyset$.

Note that $B$ cannot be an empty set since $h$ is an edge in $\cH'$  and every edge in $\cH' = (\cH^*_{\tau})^u$ intersects $\tau$ by definition. 
If $A = \emptyset$, then $h'$ is a maximal clique in $\tau$. By Phase I of the construction, $h'$ either is a hyperedge in $\cH'$ or is contained in the head of a hyperedge. In either case, since $h\subsetneq h'$, it contradicts that $h$ is a maximal edge in $\cH'$. 

\item Case 2: $A \neq \emptyset$ and $B\neq \emptyset$. 

Since $A\cup B$ induces a maximal clique in $G'$, it follows that for every $a \in A, b\in B$, $a \in pa(b)$. Hence $B$ the common children of some elements in $A$. Recall that in Phase I of our construction, for every $v$, $(\{v\},K_v)$ is an hyperedge in $\cH$ where $K_v$ is a maximal clique in the children of $v$ in $G$. Hence there exists $\cF \subseteq E(\cH^*_{\tau})$ such that $B \subseteq \cap_{h\in \cF} H(h)$. By maximality of $h'$, $B = \cap_{h\in \cF} H(h)$. Now by our construction in Phase II, there exists a hyperedge
$$h'' = \Bigg( \ds\bigcup_{\substack{h \in E(\cH^*_{\tau}))\\B\subseteq H(h)}} T(h), B \Bigg)  \in E(\cH^*_{\tau}).$$ 
Since every element in $A$ is a parent of every element in $B$, it follows that 
$$A\subseteq \ds\bigcup_{\substack{h \in E(\cH^*_{\tau}))\\B\subseteq H(h)}} T(h).$$
By maximality of $A$, it follows that 
$$h\subsetneq h' = h'' \in E(\cH^*_{\tau}).$$
which contradicts the maximality of $h$ again.
\end{description}

Hence in both cases, we obtain by contradiction that $h$ induces a maximal clique in $G'$.

It remains to show (2). Suppose $h$ induces a maximal clique in $G'$. Observe that every hyperedge in $\cH'$ induces a clique in $G'$. Similar logic and case analysis above apply and it is not hard to see that $h$ is a maximal edge in $\cH'$. We will leave the details to the reader.
\end{proof}

\noindent\textit{Example.}
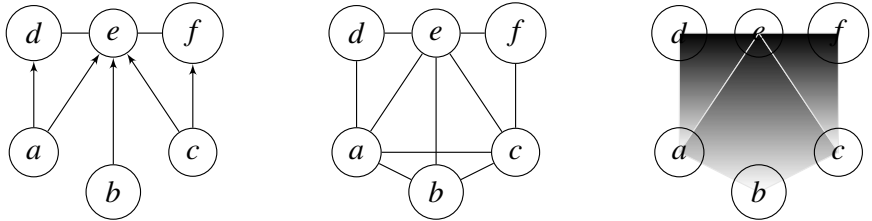
\begin{figure}[htb]
	\begin{center}
		\begin{minipage}{.2\textwidth}
		\resizebox{3cm}{!}{\begin{tikzpicture}
\tikzset{vertex/.style = {shape=circle,draw,minimum size=1.5em}}
\tikzset{edge/.style = {->,> = latex'}}


\node[vertex] (a) at (0,0)  {$a$};
\node[vertex] (b) at (1,-0.5)  {$b$}; 
\node[vertex] (c) at (2,0)  {$c$};
\node[vertex] (d) at (0,1.5){$d$};
\node[vertex] (e) at (1,1.5){$e$};
\node[vertex] (f) at (2,1.5){$f$};


\draw[edge] (a) to (d);
\draw[edge] (a) to (e);
\draw[edge] (b) to (e);
\draw[edge] (c) to (f);
\draw[edge] (c) to (e);

\draw (d) to (e);
\draw (e) to (f);

\end{tikzpicture}}
		\end{minipage}
		\hspace{1cm}
		\begin{minipage}{.2\textwidth}
		\resizebox{3cm}{!}{\begin{tikzpicture}
\tikzset{vertex/.style = {shape=circle,draw,minimum size=1.5em}}
\tikzset{edge/.style = {->,> = latex'}}


\node[vertex] (a) at (0,0)  {$a$};
\node[vertex] (b) at (1,-0.5)  {$b$}; 
\node[vertex] (c) at (2,0)  {$c$};
\node[vertex] (d) at (0,1.5){$d$};
\node[vertex] (e) at (1,1.5){$e$};
\node[vertex] (f) at (2,1.5){$f$};


\draw (a) to (d);
\draw (a) to (e);
\draw (b) to (e);
\draw (c) to (e);
\draw (c) to (f);
\draw (d) to (e);
\draw (e) to (f);
\draw (a) to (b);
\draw (a) to (c);
\draw (b) to (c);

\end{tikzpicture}}
		\end{minipage}
		\hspace{1cm}
		\begin{minipage}{.2\textwidth}
		\resizebox{3cm}{!}{\begin{tikzpicture}
\tikzset{vertex/.style = {shape=circle,draw,minimum size=1.5em}}
\tikzset{edge/.style = {->,> = latex'}}

\shadedraw[bottom color=white, top color=black, opacity=5, draw=black!10!white] (0,0)--(0,1.5)--(1,1.5)--cycle;

\shadedraw[bottom color=white, top color=black, opacity=5, draw=black!10!white]  (0,0)--(1,-0.5)--(2,0)--(1,1.5)--cycle;

\shadedraw[bottom color=white, top color=black, opacity=5, draw=black!10!white]  (1,1.5)--(2,1.5)--(2,0)--cycle;

\node[vertex] (a) at (0,0)     {$a$};
\node[vertex] (b) at (1,-0.5)  {$b$}; 
\node[vertex] (c) at (2,0)     {$c$};
\node[vertex] (d) at (0,1.5)   {$d$};
\node[vertex] (e) at (1,1.5)   {$e$};
\node[vertex] (f) at (2,1.5)   {$f$};

\end{tikzpicture}}
		\end{minipage}
		\caption{(1) a simple chain graph $G$; (2) The moral graph $G^m$ of $G$; (3) $\cM(\cH)$ where $\cH$ is the cononical LWF DAH of $G$.}
		\label{f:factor}
	\end{center} 
\end{figure}

In Figure \ref{f:factor}, both $G$ and its canonical LWF DAH $\cH$ have chain components $\{\{a\},\{b\},\{c\},\{d,e,f\}\}$.
Figure \ref{f:factor} (2) shows the moral graph $G^m$ of $G$. The maximal cliques in $G^m$ are $\{ade, abce, cef\}$. Thus, by the factorization property of LWF chain graphs, we have that a probability measure $P$ with density $f$ that factorizes according to $G$ satisfies 
\begin{align*}
f(x) &= f(x_a) f(x_b) f(x_c) f(x_{d,e,f} \ver x_{a,b,c})\\
     &= f(x_a) f(x_b) f(x_c) \psi_{ade}(x) \psi_{abce}(x)  \psi_{cef}(x).
\end{align*}
Figure \ref{f:factor} (3) gives the undirected hypergraph with edge set $\cM(\cH)$. Observe that $\cM(\cH)$ has the same members as the set of maximal cliques in $G^m$. Hence by the factorization property of Bayesian hypergraphs, they admit the same factorization.

\section{Intervention in Bayesian hypergraphs}\label{sec:intervention}

Formally, intervention in Bayesian hypergraphs can be defined analogously to intervention in LWF chain graphs \cite{LR}. In this section, we give graphical procedures that are consistent with the intervention formulas for chain graphs (Equation \eqref{intervention-CG}, \eqref{intervention-LWF}) and for Bayesian hypergraphs (Equation \eqref{bh:intervention1}, \eqref{bh:intervention2}). Before we present the details, we need some additional definitions and tools to determine when factorizations according to two chain graphs or DAHs are equivalent in the sense that they could be written as products of the same type of functions (functions that depend on same set of variables). We say two chain graphs $G_1, G_2$ admit the same \textit{factorization decomposition} if for every probability density $f$ that factorizes according to $G_1$, $f$ also factorizes according to $G_2$, and vice versa. Similarly, two  DAHs $\cH_1, \cH_2$ admit the same \textit{factorization decomposition} if for every probability density $f$ that factorizes according to $\cH_1$, $f$ also factorizes according to $\cH_2$, and vice versa.

\subsection{Factorization equivalence and intervention in  chain graphs}

In this subsection, we will give graphical procedures to model intervention based on the formula introduced by Lauritzen and Richardson in \cite{LR}. Let us first give some background.
In many statistical context, we would like to modify the distribution of a variable $Y$ by intervening externally and forcing the value of another variable $X$ to be $x$. This is commonly refered as \textit{conditioning by intervention} or \textit{conditioning by action} and denoted by $Pr(y \| x)$ or $Pr(y \ver X \leftarrow x)$.
Other expressions such as $Pr(Y_x = y), P_{man(x)}(y), set(X=x), X = \hat{x}$ or $do(X = x)$ have also been used to denote intervention conditioning (Neyman \cite{Neyman}; Rubin \cite{Rudin}; Spirtes et al. \cite{Spirtes}; Pearl \cite{Pearl93, Pearl95, Pearl20}). 

Let $G$ be a chain graph with chain components $\{\tau: \tau \in \cD\}$. Moreover, assume further that a subset $A$ of variables in $V(G)$ are set such that for every $a\in A$, $x_a = a_0$.
Lauritzen and Richardson, in \cite{LR}, generalized the conditioning by intervention formula for DAGs and gave the following formula for intervention in chain graphs (where it is understood that the probability of any configuration of variables inconsistent with the intervention is zero). A probability density $f$ factorizes according to $G$ (with $A$ intervened) if 

\begin{equation}\label{intervention-CG}
f(x \| x_A) = \ds\prod_{\tau\in \cD} f(x_{\tau \backslash A} \ver x_{pa(\tau)}, x_{\tau \cap A}).
\end{equation}
Moreover, for each $\tau \in \cD$,

\begin{equation}\label{intervention-LWF}
    f(x_{\tau \backslash A} \ver x_{pa(\tau)}, x_{\tau \cap A}) = Z^{-1}(x_{pa(\tau)}, x_{\tau \cap A}) \ds\prod_{h \in \cC} \psi_h(x)
\end{equation}
where $\cC$ is the set of maximal cliques in $(G_{\tau \cup pa(\tau)})^m$ and $Z^{-1}(x_{pa(\tau)}, x_{\tau \cap A}) = \ds\int_{\cX_{\tau \backslash A}} \ds\prod_{h \in \cC} \psi_h(x) \mu_{\tau\backslash A} (dx_{\tau\backslash A})$.

\begin{definition}
$G_1$ and $G_2$ be two chain graphs.
Given a subset $A_1 \subseteq V(G_1)$ and $A_2 \subseteq V(G_2)$,
we say $(G_1, A_1)$ and $(G_2, A_2)$ are \emph{factorization-equivalent}\footnote{This term was defined for a different purpose in \cite{Studeny09}.} if they become the same chain graph after removing from $G_i$  all vertices in $A_i$ together with the edges incident to vertices in $A_i$ for $i\in \{1,2\}$. Typically, $A_i$ is a set of constant variables in $V(G_i)$ created by intervention.
\end{definition}

\begin{theorem}\label{thm: CG_factorization-equivalence}
Let $G_1$ and $G_2$ be two chain graphs defined on the same set of variables $V$. Moreover a common set of variables $A$ in $V$ are set by intervention such that for every $a \in A$, $x_a = a_0$. If $(G_1,A)$ and $(G_2,A)$ are factorization-equivalent, then $G_1$ and $G_2$ admit the same factorization decomposition.
\end{theorem}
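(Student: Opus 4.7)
My plan is to use the symmetry of the conclusion to reduce to a single direction: I will show that any density $f$ satisfying the intervention factorization \eqref{intervention-CG}--\eqref{intervention-LWF} for $G_1$ also satisfies it for $G_2$. Write $G' := G_1 \backslash A = G_2 \backslash A$ for the common reduction, which exists by factorization-equivalence. The key idea is that when $x_A = a_0$ is substituted throughout \eqref{intervention-CG}--\eqref{intervention-LWF} for $G$, the resulting product on $x_{V \backslash A}$ has a structure determined only by $G'$ and the parameters $a_0$, so the same product is obtained from either $G_1$ or $G_2$.

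First I would start from
\begin{equation*}
f(x \| x_A) = \prod_{\tau \in \cD(G_1)} Z_\tau^{-1}(x_{pa_{G_1}(\tau)}, x_{\tau \cap A}) \prod_{h \in \cC_\tau^{(1)}} \psi_h(x),
\end{equation*}
where $\cC_\tau^{(1)}$ denotes the maximal cliques of $((G_1)_{\tau \cup pa_{G_1}(\tau)})^m$. Substituting $x_A = a_0$ collapses each $\psi_h$ to a function $\tilde\psi_h(x_{h \backslash A})$ of the surviving variables, and the normalizer $Z_\tau^{-1}$ becomes a function of $x_{pa_{G_1}(\tau) \backslash A}$ that can be absorbed into a $\tilde\psi_h$ indexed by any surviving clique containing it.

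Next I would regroup the resulting factors according to the chain components $\tau'$ of $G'$ and verify, as the central combinatorial lemma, that this regrouping produces exactly the right-hand side of \eqref{intervention-CG}--\eqref{intervention-LWF} applied to $G_2$. The lemma relies on three correspondences that depend only on $G'$: (i) each chain component of $G'$ is the intersection with $V \backslash A$ of a unique chain component of $G_i$; (ii) $pa_{G'}(\tau')$ equals the non-intervened $G_i$-parents reaching $\tau'$; and (iii) discarding non-maximal elements from $\{h \backslash A : h \in \cC_\tau^{(i)},\ \tau \in \cD(G_i)\}$ yields the union over $\tau' \in \cD(G')$ of the maximal cliques of $(G'_{\tau' \cup pa_{G'}(\tau')})^m$. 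Because the right-hand sides of (i)--(iii) refer only to $G'$, the regroupings for $i=1$ and $i=2$ have identical shape.

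I expect the main obstacle to be (iii), which requires checking that moralization and removal of $A$ commute in the appropriate sense: every pair of $G'$-parents of $\tau'$ that gets married by moralization inside $G'$ must already appear jointly inside some moralized $G_i$-clique (so that a surviving $h\backslash A$ covers the pair), and conversely no spurious maximal cliques are created by stripping $A$. Once (i)--(iii) are in place, the reduced factorization of $f$ matches the form prescribed by $G_2$ (since $G_2 \backslash A = G'$), so $f$ factorizes according to $G_2$ under the intervention; the converse direction follows by symmetry, completing the proof.
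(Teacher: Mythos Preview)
Your approach is essentially the same as the paper's: introduce the common reduced graph $G' = G_1\backslash A = G_2\backslash A$, and argue that each $G_i$ (under the intervention $x_A=a_0$) admits the same factorization decomposition as $G'$, by showing that each moralized clique $h$ either has $h\cap\tau\subseteq A$ (so its factor cancels with the normalizer) or else $h\backslash A$ lands inside a moralized clique of $G'$. The paper treats your point~(iii) just as informally as you anticipate, simply asserting that $h\backslash A$ is a clique in the reduced moral graph without the commutation check you flag; so your plan matches the paper's proof both in strategy and in the level of detail left to the reader.
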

\begin{proof}
Let $G_0$ be the chain graph obtained from $G_1$ by removing all vertices in $A$ and the edges incident to $A$. It suffices to show that $G_1$ and $G_2$ both admit the same factorization decomposition as $G_0$. Let $\cD_1$, $\cD_0$ be the set of chain components of $G_1$ and $G_0$ respectively. Let $\tau \in \cD_1$ be an arbitrary chain component of $G_1$. By the factorization formula in \eqref{intervention-LWF}, it follows that 
 $$f(x_{\tau \backslash A} \ver x_{pa(\tau)}, x_{\tau \cap A}) = Z^{-1}(x_{pa(\tau)}, x_{\tau \cap A}) \ds\prod_{h \in \cC} \psi_h(x)$$
 where $\cC$ is the set of maximal cliques in $(G_{\tau \cup pa(\tau)})^m$ and $Z^{-1}(x_{pa(\tau)}, x_{\tau \cap A}) = \ds\int_{\cX_{\tau \backslash A}} \ds\prod_{h \in \cC} \psi_h(x) \mu_{\tau\backslash A} (dx_{\tau\backslash A})$.
 Notice that for any maximal clique $h_1 \in \cC$ such that $h_1\cap A = \emptyset$, $h_1$ is also a clique in $(G_0[\tau \backslash A])^m$. For $h_1 \in \cC$ with $h_1\cap A \neq \emptyset$, there are two cases:
 \begin{description}
     \item Case 1: $(h_1\cap \tau) \backslash A \neq \emptyset$. In this case, observe that $h_1\backslash A$ is also a clique in $(G_0[\tau \backslash A])^m$, thus is contained in some maximal clique $h'$ in $(G_0[\tau \backslash A])^m$. Since all variables in $A$ are pre-set as constants, it follows that $\psi_{h_1}(x)$ also appears in  a factor in the factorization of $f$ according to $G_0$.
     \item Case 2: $h_1\cap \tau \subseteq A$. In this case, note that $h_1\cap \tau$ is disjoint with $\tau \backslash A$. Hence $\psi_{h_1}(x)$ appears as a factor independently of $x_{\tau \backslash A}$ in both $Z^{-1}(x_{pa(\tau)}, x_{\tau \cap A})$ and $\ds\prod_{h \in \cC} \psi_h(x)$, which cancels out with itself. 
 \end{description}
 Thus it follows that  every probability density $f$ that factorizes according to $G_1$ also factorizes according to $G_0$.
 On the other hand, it is easy to see that for every $\tau' \in \cD_0$ and every maximal clique $h'$ in $(G_0[\tau'])^m$, $h'$ is contained in some maximal clique $h$ in $(G_1[\tau])^m$ for some $\tau \in\cD_1$. Hence we can conclude that $G_1$ and $G_0$ admit the same factorization decomposition. The above argument also works for $G_2$ and $G_0$. Thus, $G_1$ and $G_2$ admit the same factorization decomposition.
\end{proof}

We now define a graphical procedure (call it \textit{redirection procedure}) that is consistent with the intervention formula in Equation \eqref{intervention-CG} and \eqref{intervention-LWF}. Let $G$ be a chain graph. Given an intervened set of variables $A \subseteq V(G)$, let $\hg$ be the chain graph obtained from $G$ by performing the following operation: for every $u \in A$ and every undirected edge $e = \{u,w\}$ containing $u$, replace $e$ by a directed edge from $u$ to $w$; finally remove all the directed edges that point to some vertex in $A$. By replacing the undirected edge with a directed edge, we replace any feedback mechanisms that include a variable in $A$ with a causal mechanism.  The intuition behind the procedure is the following. Since a variable that is set by intervention cannot be modified, the symmetric feedback relation is turned into an asymmetric causal one. Similarly, we can justify this graphical procedure as equivalent to removing the variables in $A$ from some equations in the Gibbs process on top of p. 338 of \cite{LR}, as Lauritzen and Richardson \cite{R2018} did for Equation (18) in \cite{LR}.

\begin{theorem}\label{thm:intervene-CG}
Let $G$ be a chain graph with a subset of variables $A \subseteq V(G)$ set by intervention such that for every $a \in A$. $x_a = a_0$. Let $\hg$ be obtained from $G$ by the redirection procedure. Then $G$ and $\hg$ admit the same factorization decomposition.
\end{theorem}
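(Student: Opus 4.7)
The plan is to reduce this to Theorem \ref{thm: CG_factorization-equivalence} by showing that $(G,A)$ and $(\hg,A)$ are factorization-equivalent, i.e., that deleting $A$ together with all edges incident to $A$ from $G$ and from $\hg$ yields the same chain graph. The redirection procedure touches only edges incident to at least one vertex of $A$: it reorients undirected edges touching $A$ and removes directed edges pointing into $A$. Consequently, every edge of $G$ whose both endpoints lie in $V(G)\setminus A$ is preserved unchanged in $\hg$, and every edge of $\hg$ that involves a vertex of $A$ is incident to $A$. Deleting $A$ and all $A$-incident edges from either graph therefore produces the same subgraph, namely the induced subgraph $G[V(G)\setminus A]$.

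To apply Theorem \ref{thm: CG_factorization-equivalence}, I also need to verify that $\hg$ is itself a chain graph. The key observation is that after the redirection procedure, every vertex $u\in A$ has only outgoing edges in $\hg$: its formerly undirected incident edges have been oriented as $u\to\cdot$, and every directed edge of $G$ that pointed into $u$ was deleted. Hence $u$ cannot appear in a partially directed cycle of $\hg$, because traversing such a cycle through $u$ would require an edge entering $u$. It follows that any hypothetical partially directed cycle of $\hg$ would be confined to $V(G)\setminus A$; but $\hg[V(G)\setminus A]=G[V(G)\setminus A]$ is a subgraph of the chain graph $G$, hence acyclic. This contradiction shows that $\hg$ is a chain graph.

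The only mild subtlety I expect is how to interpret the redirection rule for undirected edges whose \emph{both} endpoints lie in $A$, since the rule naively prescribes two opposite orientations. This ambiguity is harmless: any such edge, regardless of the chosen orientation, is incident to $A$ and is either removed in the second step of the procedure (if it now points into $A$) or erased by the subsequent deletion of $A$-incident edges, so it affects neither the induced subgraph on $V(G)\setminus A$ nor the acyclicity argument above. With $\hg$ established as a chain graph that is factorization-equivalent to $G$ relative to $A$, Theorem \ref{thm: CG_factorization-equivalence} immediately yields the desired equality of factorization decompositions.
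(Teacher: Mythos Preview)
Your proposal is correct and follows essentially the same route as the paper: both reduce to Theorem~\ref{thm: CG_factorization-equivalence} by observing that deleting $A$ and all $A$-incident edges from $G$ and from $\hg$ yields the same chain graph. Your argument is in fact more careful than the paper's, which omits the verification that $\hg$ is a chain graph and the discussion of edges with both endpoints in $A$.
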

\begin{proof}

It is not hard to see that removing from $\hg$ and $G$ all vertices in $A$ and all edges incident to $A$ results in the same chain graph. Hence by Theorem \eqref{thm: CG_factorization-equivalence}, $G$ and $\hg$ admit the same factorization decomposition.



\end{proof}
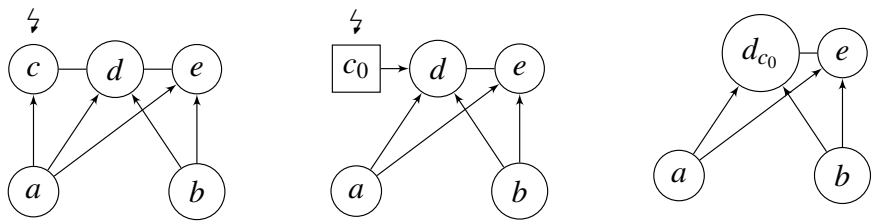
\begin{figure}[htb]
	\begin{center}
		\begin{minipage}{.2\textwidth}
		\resizebox{3cm}{!}{\begin{tikzpicture}
\tikzset{vertex/.style = {shape=circle,draw,minimum size=1.5em}}
\tikzset{edge/.style = {->,> = latex'}}


\node[vertex] (a) at (0,0)  {$a$};
\node[vertex] (b) at (2,0)  {$b$};
\node[vertex,label=\lightning] (c) at (0,1.5)  {$c$};
\node[vertex] (d) at (1,1.5)  {$d$};
\node[vertex] (e) at (2,1.5)  {$e$};

\draw[edge] (a) to (c);
\draw[edge] (a) to (d);
\draw[edge] (a) to (e);
\draw[edge] (b) to (d);
\draw[edge] (b) to (e);

\draw (c) to (d);
\draw (e) to (d);

\end{tikzpicture}}
		\end{minipage}
		\hspace{1cm}
		\begin{minipage}{.2\textwidth}
		\resizebox{3cm}{!}{\begin{tikzpicture}
\tikzset{vertex/.style = {shape=circle,draw,minimum size=1.5em}}
\tikzset{edge/.style = {->,> = latex'}}


\node[vertex] (a) at (0,0)  {$a$};
\node[vertex] (b) at (2,0)  {$b$};
\node[vertex, rectangle, label=\lightning] (c) at (0,1.5)  {$c_0$};
\node[vertex] (d) at (1,1.5)  {$d$};
\node[vertex] (e) at (2,1.5)  {$e$};

\draw[edge] (c) to (d);
\draw[edge] (a) to (d);
\draw[edge] (a) to (e);
\draw[edge] (b) to (d);
\draw[edge] (b) to (e);

\draw (e) to (d);

\end{tikzpicture}}
		\end{minipage}
		\hspace{1cm}
		\begin{minipage}{.2\textwidth}
		\resizebox{3cm}{!}{\begin{tikzpicture}
\tikzset{vertex/.style = {shape=circle,draw,minimum size=1.5em}}
\tikzset{edge/.style = {->,> = latex'}}


\node[vertex] (a) at (0,0)  {$a$};
\node[vertex] (b) at (2,0)  {$b$};
\node[vertex] (d) at (1,1.5)  {$d_{c_0}$};
\node[vertex] (e) at (2,1.5)  {$e$};

\draw[edge] (a) to (d);
\draw[edge] (a) to (e);
\draw[edge] (b) to (d);
\draw[edge] (b) to (e);

\draw (e) to (d);

\end{tikzpicture}}
		\end{minipage}
		\caption{(a) A chain graph $G$; (b) The graph $\hg$ obtained from $G$ through the redirection procedure; (c) The graph $G_0$ obtained from $G$ by deleting variables in $A$.}
		\label{f:intervene-CG}
	\end{center} 
\end{figure}

\begin{example}
Consider the chain graph $G$ shown in Figure \ref{f:intervene-CG}. Let $\hg$ be the graph obtained from $G$ through the redirection procedure described in this subsection. Let $G_0$ be the chain graph obtained from $G$ by deleting the vertex $c_0$ and the edges incident to $c_0$. We will compare the factorization decomposition according to the formula \eqref{intervention-CG},\eqref{intervention-LWF} as well as the graph structure $\hg$ and $G_0$.

By the formula \eqref{intervention-CG} and \eqref{intervention-LWF} proposed in \cite{LR}, when $x_c$ is set as $c_0$ by intervention,
\begin{align*}
    f(x \| x_c) &= f(x_a)f(x_b)f(x_{de} \ver x_{abc_0})\\
                &= f(x_a)f(x_b) \frac{\psi_{ac_0d}(x)\psi_{abde}(x)}{\sum_{d,e} \psi_{ac_0d}(x)\psi_{abde}(x)}.
\end{align*}

Now consider the factorization according to $\hg$. The chain components of $\hg$ are $\{\{a\},\{b\},\{c\},\{d,e\}\}$ with $x_c$ set to be $c_0$. The factorization according to $\hg$ is as follows:
\begin{align*}
    f_{\hg}(x \| x_c) &= f_{\hg}(x_a) f_{\hg}(x_b) f_{\hg}(x_c) f_{\hg}(x_{de} \ver x_{abc_0})\\
                &= f_{\hg}(x_a) f_{\hg}(x_b) f_{\hg}(x_c)\frac{\psi_{ac_0d}(x)\psi_{abde}(x)}{\sum_{d,e} \psi_{ac_0d}(x)\psi_{abde}(x)},
\end{align*}
where $f(x_c) = 1$ when $x_c = c_0$ and otherwise $0$. Hence $G$ and $\hg$ admit the same factorization.

Now consider the factorization according to $G_0$. The chain components of $G_0$ are $\{\{a\},\{b\},\{d,e\}\}$. The factorization according to $G_0$ is as follows:
\begin{align*}
    f_0(x) &= f_0(x_a) f_0(x_b) f_0(x_{de} \ver x_{ab})\\
                &= f_0(x_a) f_0(x_b) \frac{\psi_{ad}(x)\psi_{abde}(x)}{\sum_{d,e} \psi_{ad}(x)\psi_{abde}(x)},
\end{align*}
Observe that $f_0(x)$ has the same form of decomposition as $f(x \| x_c)$ since $x_c$ is set to be $c_0$ in $\psi_{ac_0d}(x)$ (with the understanding that the probability of any configuration of variables with $x_c\neq c_0$ is zero). Hence we can conclude that $G, \hg$ (with $x_c$ intervened) and $G_0$ admit the same factorization decomposition.

\end{example}

\subsection{Factorization equivalence and intervention in Bayesian hypergraphs}

Intervention in Bayesian hypergraphs can be modeled analogously to the case of chain graphs. We use the same notation as before. Let $\cH$ be a DAH and $\{\tau: \tau \in \cD\}$ be its chain components. Moreover, assume further that a subset $A$ of variables in $V(\cH)$ are set such that for every $a\in A$, $x_a = a_0$. Then a probability density $f$ factorizes according to $\cH$ (with $A$ intervened) as follows: (where it is understood that the probability of any configuration of variables inconsistent with the intervention is zero): 
\begin{equation}\label{bh:intervention1}
f(x \| x_A) = \ds\prod_{\tau \in \cD} f(x_{\tau \backslash A} \ver x_{pa(\tau)}, x_{\tau \cap A}).
\end{equation}

For each $\tau \in \cD$, define $\cH_{\tau}^*$ to be the subhypergraph of $\cH_{\tau \cup pa_{\cD}(\tau)}$ containing all edges $h$ in $\cH_{\tau \cup pa(\tau)}$ such that $H(h) \subseteq \tau$, then

\begin{equation}\label{bh:intervention2}
  f(x_{\tau \backslash A} \ver x_{pa(\tau)}, x_{\tau \cap A}) = Z^{-1}(x_{pa(\tau)}, x_{\tau \cap A}) \ds\prod_{h \in \cM(\cH_{\tau}^*)} \psi_h(x).
\end{equation}
where $Z^{-1}(x_{pa(\tau)},x_{\tau \cap A}) = \ds\int_{\cX_{\tau \backslash A}} \prod_{h \in \cM(\cH_{\tau}^*)} \psi_h(x) \mu_{\tau \backslash A} (dx_{\tau \backslash A})$ and $\psi_h$ are non-negative functions that depend only on $x_h$.

\begin{definition}
Let $\cH_1$ and $\cH_2$ be two Bayesian hypergraphs.
Given a subset of variables $A_1 \subseteq V(\cH_1)$ and $A_2 \subseteq V(\cH_2)$,
we say $(\cH_1, A_1)$ and $(\cH_2, A_2)$ are \emph{factorization-equivalent} if performing the following operations to $\cH_1$ and $\cH_2$ results in the same directed acyclic hypergraph:
\begin{enumerate}[(i)]\label{BH-fact-equiv}
    \item Deleting all hyperedges with empty head, i.e., hyperedges of the form $(S,\emptyset)$.
    \item Deleting every hyperedge that is contained in some other hyperedge, i.e., delete $h$ if there is another $h'$ such that $T(h)\subseteq T(h')$ and $H(h)\subseteq H(h')$.
    \item \textit{Shrinking} all hyperedges of $\cH_i$ containing vertices in $A_i$, i.e. replace every hyperedge $h$ of $\cH_i$ by $h'=(T(h)\backslash A_i, H(h)\backslash A_i)$ for $i \in  \{1,2\}$.
\end{enumerate}
Typically, $A$ is a set of constant variables in $V$ created by intervention.
\end{definition}

\begin{theorem}\label{thm: BH_factorization-equivalence}
Let $\cH_1$ and $\cH_2$ be two DAHs defined on the same set of variables $V$. Moreover, a common set of variables $A$ in $V$ are set by intervention such that for every $a \in A$, $X_a = a_0$. If $(\cH_1,A)$ and $(\cH_2,A)$ are factorization-equivalent, then $\cH_1$ and $\cH_2$ admit the same factorization decomposition.
\end{theorem}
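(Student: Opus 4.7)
My plan is to mirror the strategy used in the proof of Theorem~\ref{thm: CG_factorization-equivalence}. Let $\cH_0$ be the common DAH obtained from $\cH_1$ (and equivalently from $\cH_2$) by performing operations (i)--(iii) of Definition~\ref{BH-fact-equiv}. By symmetry it suffices to show that $\cH_1$ and $\cH_0$ admit the same factorization decomposition when $x_A$ is held fixed at $a_0$; combining this with the analogous equivalence for $\cH_2$ and $\cH_0$ then yields the theorem.

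I would dispose of operations (i) and (ii) first. Operation (ii) never affects the factorization: for any chain component $\tau$, if a hyperedge $h$ is strictly contained in another edge $h'$ as an undirected set, then $h \notin \cM(\cH_\tau^*)$ by maximality, so deleting $h$ leaves every $\cM(\cH_\tau^*)$ and hence each factor in \eqref{bh:intervention2} unchanged. Operation (i) removes empty-head edges $h=(S,\emptyset)$; such edges create no co-head relation and so preserve the chain-component partition, and any factor $\psi_h(x_S)$ they might contribute either depends only on $x_{pa(\tau)}$---in which case it cancels against the normalization $Z^{-1}$ in \eqref{bh:intervention2}---or is absorbed into an incident maximal factor, and in either case it may be omitted without altering the form of the decomposition.

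The heart of the argument is operation (iii). Here I use that $x_a = a_0$ is a constant for each $a \in A$, so that substituting these values into each $\psi_h(x)$ produces a function $\widetilde\psi_{h'}(x)$ depending only on variables in $h' = (T(h)\setminus A,\, H(h)\setminus A)$. The factorization \eqref{bh:intervention1}--\eqref{bh:intervention2} thus rewrites, after intervention, as a product over shrunk edges $h'$ rather than original edges $h$; after discarding shrunk edges that have empty head or are contained in another shrunk edge (which, by the previous paragraph, do not change the form of the decomposition), what remains are precisely the edges of $\cH_0$.

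The main obstacle I anticipate is reconciling the chain-component structures of $\cH_1$ and $\cH_0$: shrinking can sever co-head relations, so the chain components $\cD_0$ of $\cH_0$ may be strictly finer inside $V \setminus A$ than those of $\cD_1$, and each $a \in A$ becomes an isolated singleton chain component of $\cH_0$ with no incident edges. To handle this I will verify that, for every $\tau \in \cD_1$, the family $\{\,h \setminus A : h \in \cM(\cH_\tau^*)\,\}$ (after removing empty-head and dominated members) partitions as a disjoint union over those $\sigma \in \cD_0$ with $\sigma \subseteq \tau \setminus A$, matching $\bigcup_\sigma \cM((\cH_0)_\sigma^*)$; the singleton $A$-components of $\cH_0$ contribute only the trivial indicator factors forcing $x_a = a_0$ and hence do not affect the decomposition. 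With this bookkeeping in place, the factorizations \eqref{bh:intervention1}--\eqref{bh:intervention2} for $\cH_1$ and for $\cH_0$ coincide term by term, establishing the claim.
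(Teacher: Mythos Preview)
Your proposal is correct and follows essentially the same route as the paper's own proof: define $\cH_0$ as the common reduced DAH, argue that each of operations (i)--(iii) preserves the form of the factorization, and conclude by symmetry. The paper's proof likewise dispatches (i) and (ii) quickly and then, for (iii), splits into the two cases $H(h_1)\subseteq A$ (the factor cancels against the normalization) and $H(h_1)\setminus A\neq\emptyset$ (the shrunk edge lands in some refined component $\tau_i'\subseteq\tau\setminus A$ and is absorbed into a maximal edge there), which is exactly the chain-component bookkeeping you anticipate.
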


\begin{proof}
Similar to the proof in Theorem \ref{thm: CG_factorization-equivalence}, let $\cH_0$ be the DAH obtained from $\cH_1$ (or $\cH_2$) by performing the operations above repeatedly.
Let $\cD_1$ and $\cD_0$ be the set of chain components of $\cH_1$ and $\cH_0$ respectively.
First, note that performing the operation $(i)$ does not affect the factorization since hyperedges of the form $h = (S,\emptyset)$ never appear in the factorization decomposition due to the fact that $H(h) \cap \tau = \emptyset$ for every $\tau \in \cD_1$.
Secondly, $(ii)$ does not change the factorization decomposition too since if one hyperedge $h$ is contained in another hyperedge $h'$ as defined, then $\psi_h(x)$ can be simply absorbed into $\psi_{h'}(x)$ by replacing $\psi_{h'}(x)$ with $\psi_{h'}(x) \cdot \psi_{h}(x)$.

Now let $\tau \in \cD_1$ be an arbitrary chain component of $\cH_1$ and $h_1 \in \cH_1[\tau]^*$, i.e., the set of hyperedges in $\cH_1$ whose head intersects $\tau$. Suppose that $\tau$ is separated into several chain components $\tau'_1, \tau'_2, \cdots, \tau'_t$ in $\cH_0$ because of the shrinking operation. If $h_1\cap A = \emptyset$, then $h_1$ is also a hyperedge in $\cH_0[\tau \backslash A]^*$. If $h_1 \cap A \neq \emptyset$, there are two cases:
\begin{description}
    \item Case 1: $H(h_1) \subseteq A$. Then since variables in $A$ are constants, it follows that in Equation \eqref{bh:intervention2}, $\psi_{h_1}(x)$ does not depend on variables in $\tau\backslash A$. Hence $\psi_h(x)$ appears as factors independent of variables in $\tau\backslash A$ in both $Z^{-1}(x_{pa(\tau)},x_{\tau \cap A})$ and $\ds\prod_{h \in \cM(\cH_{\tau}^*)} \psi_h(x)$, thus cancels out with itself. 
    Note that, $h_1$ does not exist in $\cH_0$ too since $h_1$ becomes a hyperedge with empty head after being shrinked and thus is deleted in Operation (i).
    
    \item Case 2: $H(h_1)\backslash A \neq \emptyset$. In this case, $H(h_1) \backslash A$ must be entirely contained in one of $\{\tau'_1, \cdots, \tau'_t\}$ . Without loss of generality, say $H(h_1)\backslash A  \subseteq \tau'_1$ in $\cH_0$. Then note that $h_1 \backslash A$ must be contained in some maximal hyperedge $h'$ in $E(\cH_0)$ such that $H(h') \cap \tau'_1 \neq \emptyset$. Moreover, recall that variables in $A$ are constants. Hence $\psi_{h_1}$ must appear in some factor in the factorization of $f$ according to $\cH_0$.
\end{description}

Thus it follows that  every probability density $f$ that factorizes according to $\cH_1$ also factorizes according to $\cH_0$.
 On the other hand, it is not hard to see that for every $\tau' \in \cD_0$ and every hyperedge $h'$ in $(\cH_0[\tau'])^*$, $h'$ is contained in some maximal hyperedge $h$ in $(\cH_1[\tau])^*$ for some $\tau \in\cD_1$. Hence we can conclude that $\cH_1$ and $\cH_0$ admit the same factorization decomposition. The above argument also works for $\cH_2$ and $\cH_0$. Thus, $\cH_1$ and $\cH_2$ admit the same factorization decomposition.

\end{proof}

We now present a graphical procedure (call it \textit{redirection procedure}) for modeling intervention in Bayesian hypergraph. Let $\cH$ be a DAH and $\{\tau: \tau \in \cD\}$ be its chain components. Suppose a set of variables $x_{A}$ is set by intervention. We then modify $\cH$ as follows: for each hyperedge $h \in E(\cH)$ such as $S = H(h)\cap A \neq \emptyset$, replace the hyperedge $h$ by $h' = (T(h) \cup S, H(h)\backslash S)$. If a hyperedge has empty set as its head, delete that hyperedge. 
Call the resulting hypergraph $\hh_A$. We will show that the factorization according to $\hh_A$ is consistent with Equation \eqref{bh:intervention2}. 

\begin{theorem}\label{thm:intervene-BH}
Let $\cH$ be a Bayesian hypergraph and $\{\tau: \tau \in \cD\}$ be its chain components. Given an intervened set of variables $x_A$, let $\hh_A$ be the DAH obtained from $\cH$ by replacing each hyperedge $h \in E(\cH)$ satisfying $S = H(h)\cap A \neq \emptyset$ by the hyperedge $h' = (T(h) \cup S, H(h)\backslash S)$ and removing hyperedges with empty head. Then $\cH$ and $\hh$ admit the same factorization decomposition.
\end{theorem}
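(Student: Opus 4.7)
The plan is to verify that $(\cH, A)$ and $(\hh_A, A)$ are factorization-equivalent in the sense of the definition preceding Theorem \ref{thm: BH_factorization-equivalence}, and then invoke that theorem. The key observation is that the redirection map $h \mapsto (T(h)\cup S,\, H(h)\setminus S)$, with $S = H(h)\cap A$, moves precisely the $A$-vertices from the head into the tail, so after the shrinking operation (iii) --- which deletes all occurrences of $A$ --- the images of $h$ in $\cH$ and in $\hh_A$ should coincide.

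Concretely, I would fix $h\in E(\cH)$ and split into cases based on $S = H(h)\cap A$. If $S = \emptyset$ then $h$ is unchanged in $\hh_A$, and shrinking produces $(T(h)\setminus A,\, H(h))$ on both sides. If $S\neq\emptyset$ and $H(h)\setminus A\neq\emptyset$, then shrinking $h$ in $\cH$ yields $(T(h)\setminus A,\, H(h)\setminus A)$, while shrinking $h' = (T(h)\cup S,\, H(h)\setminus S)$ in $\hh_A$ yields $((T(h)\cup S)\setminus A,\, (H(h)\setminus S)\setminus A) = (T(h)\setminus A,\, H(h)\setminus A)$ using $S\subseteq A$ and $H(h)\setminus S = H(h)\setminus A$. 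If $H(h)\subseteq A$, then $h$ is deleted from $\hh_A$ at the outset, whereas in $\cH$ shrinking produces a hyperedge with empty head that operation (i) subsequently removes. In every case the resulting edge matches, and operation (ii) deletes redundant contained edges identically on both sides.

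It remains to check that $\hh_A$ is genuinely a DAH. Any partially directed cycle in $\hh_A$ must traverse at least one newly-redirected edge; tracking such a cycle back through $\cH$ --- using that a redirection producing $v\to w$ with $v\in A$ arises from $\{v,w\}\subseteq H(h)$ for some $h$, hence from a co-head relation in $\cH$ --- either forces a pair $(v,w)$ that is simultaneously parent and neighbor in $\cH$ (forbidden for a DAH) or descends through the canonical DAG of Proposition \ref{DAG-partition}, contradicting its acyclicity.

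With factorization-equivalence established, Theorem \ref{thm: BH_factorization-equivalence} immediately yields the conclusion. The hardest parts are the acyclicity check for $\hh_A$ and the case $H(h)\subseteq A$, where one must confirm that the empty-head hyperedge produced by shrinking contributes no factor to either decomposition (so its removal by operation (i) is cost-free for the probability density); both points are bookkeeping rather than conceptual obstacles.
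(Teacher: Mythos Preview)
Your proposal is correct and follows essentially the same route as the paper: verify that $(\cH,A)$ and $(\hh_A,A)$ are factorization-equivalent and then invoke Theorem~\ref{thm: BH_factorization-equivalence}. Your write-up is in fact more detailed than the paper's one-line proof, spelling out the three-case analysis on $S=H(h)\cap A$ and adding the acyclicity verification for $\hh_A$ that the paper leaves implicit.
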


\begin{proof}
This is a corollary of Theorem \eqref{thm: BH_factorization-equivalence} since performing the operations (i)(ii)(iii) in the definition of factorization-equivalence of DAH to $\cH$ and $\hh$ results in the same DAH. 
\end{proof}

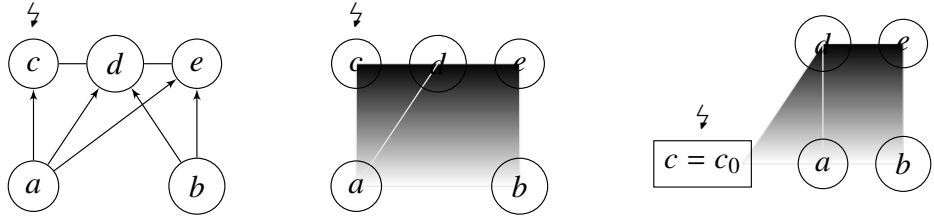
\begin{figure}[htb]
	\begin{center}
		\begin{minipage}{.2\textwidth}
		\resizebox{3cm}{!}{\begin{tikzpicture}
\tikzset{vertex/.style = {shape=circle,draw,minimum size=1.5em}}
\tikzset{edge/.style = {->,> = latex'}}


\node[vertex] (a) at (0,0)  {$a$};
\node[vertex] (b) at (2,0)  {$b$};
\node[vertex,label=\lightning] (c) at (0,1.5)  {$c$};
\node[vertex] (d) at (1,1.5)  {$d$};
\node[vertex] (e) at (2,1.5)  {$e$};

\draw[edge] (a) to (c);
\draw[edge] (a) to (d);
\draw[edge] (a) to (e);
\draw[edge] (b) to (d);
\draw[edge] (b) to (e);

\draw (c) to (d);
\draw (e) to (d);

\end{tikzpicture}}
		\end{minipage}
		\hspace{1cm}
		\begin{minipage}{.2\textwidth}
		\resizebox{3cm}{!}{\begin{tikzpicture}
\tikzset{vertex/.style = {shape=circle,draw,minimum size=1.5em}}
\tikzset{edge/.style = {->,> = latex'}}

\shadedraw[bottom color=white, top color=black, opacity=5, draw=black!10!white] (0,0)--(0,1.5)--(1,1.5)--cycle;

\shadedraw[bottom color=white, top color=black, opacity=5, draw=black!10!white] (0,0)--(2,0)--(2,1.5)--(1,1.5)--cycle;

\node[vertex] (a) at (0,0)  {$a$};
\node[vertex] (b) at (2,0)  {$b$};
\node[vertex,label=\lightning] (c) at (0,1.5)  {$c$};
\node[vertex] (d) at (1,1.5)  {$d$};
\node[vertex] (e) at (2,1.5)  {$e$};

\end{tikzpicture}}
		\end{minipage}
		\hspace{1cm}
		\begin{minipage}{.2\textwidth}
		\resizebox{3.8cm}{!}{\begin{tikzpicture}
\tikzset{vertex/.style = {shape=circle,draw,minimum size=1.5em}}
\tikzset{edge/.style = {->,> = latex'}}

\shadedraw[bottom color=white, top color=black, opacity=5, draw=black!10!white] (1,0)--(-0,0)--(1,1.5)--cycle;

\shadedraw[bottom color=white, top color=black, opacity=5, draw=black!10!white] (1,0)--(2,0)--(2,1.5)--(1,1.5)--cycle;

\node[vertex] (a) at (1,0)  {$a$};
\node[vertex] (b) at (2,0)  {$b$};
\node[vertex,rectangle, label=\lightning] (c) at (-0.5,0)  {$c= c_0$};
\node[vertex] (d) at (1,1.5)  {$d$};
\node[vertex] (e) at (2,1.5)  {$e$};

\end{tikzpicture}}
		\end{minipage}
		\caption{(a) A chain graph $G$; (b) the canonical LWF DAH $\cH$ of $G$; (c) the resulting hypergraph $\hh$ after performing the graphical procedure on $\cH$ when the variable $c$ is intervened.}
		\label{f:intervene}
	\end{center} 
\end{figure}
\begin{example}
Let $G$ be a chain graph as shown in Figure \ref{f:intervene}(a) and $\cH$ be the canonical LWF Bayesian hypergraph  of $G$ as shown in Figure \ref{f:intervene}(b), constructed based on the procedure in Section \ref{sec:hypergraph-extension}. $\cH$ has two directed hyperedges $(\{a\},\{c,d\})$ and $(\{a,b\},\{d,e\})$. Applying the redirection procedure for intervention in  Bayesian hypergraphs leads to the Bayesian hypergraph $\hh$ in Figure \ref{f:intervene}(c). We show that using equations \eqref{intervention-CG} and \eqref{intervention-LWF} for Figure \ref{f:intervene}(a) leads to the same result as if one uses the factorization formula for the Bayesian hypergraph in Figure \ref{f:intervene}(c). 

First, we compute $f(x||x_c)$ for chain graph in Figure \ref{f:intervene}(a). Based on equation \eqref{intervention-CG} we have:
$$f(x \| x_c) = f(x_a)f(x_b)f(x_{de} \ver x_{abc_0}),$$
as the effect of the atomic intervention $do(X_c=c_0)$. Then, using equation \eqref{intervention-LWF} gives:
\begin{equation}\label{ex:intervention1}
    f(x||x_c)=f(x_a)f(x_b) \frac{\psi_{ac_0d}(x)\psi_{abde}(x)}{\sum_{d,e} \psi_{ac_0d}(x)\psi_{abde}(x)}.
\end{equation}
Now, we compute $f(x)$ for Bayesian hypergraph in Figure \ref{f:intervene}(c). Using equation (\ref{eq:fact1}) gives:
$$f(x \| x_c) = f(x_a)f(x_b)f(x_{de} \ver x_{abc_0}).$$
Applying formula (\ref{eq:fact}) gives:
\begin{equation}\label{ex:intervention2}
    f(x||x_c)=f(x_a)f(x_b) f(x_c) \frac{\psi_{ac_0d}(x)\psi_{abde}(x)}{\sum_{d,e} \psi_{ac_0d}(x)\psi_{abde}(x)}
\end{equation}
Note that $f(x_c)=1$, when $x_c=c_0$, otherwise $f(x_c)=0$. As a result, the right side of equations (\ref{ex:intervention1}) and (\ref{ex:intervention2}) are the same.
\end{example}

\begin{figure}[htb]
	\begin{center}
    \includegraphics[width=.80\textwidth]{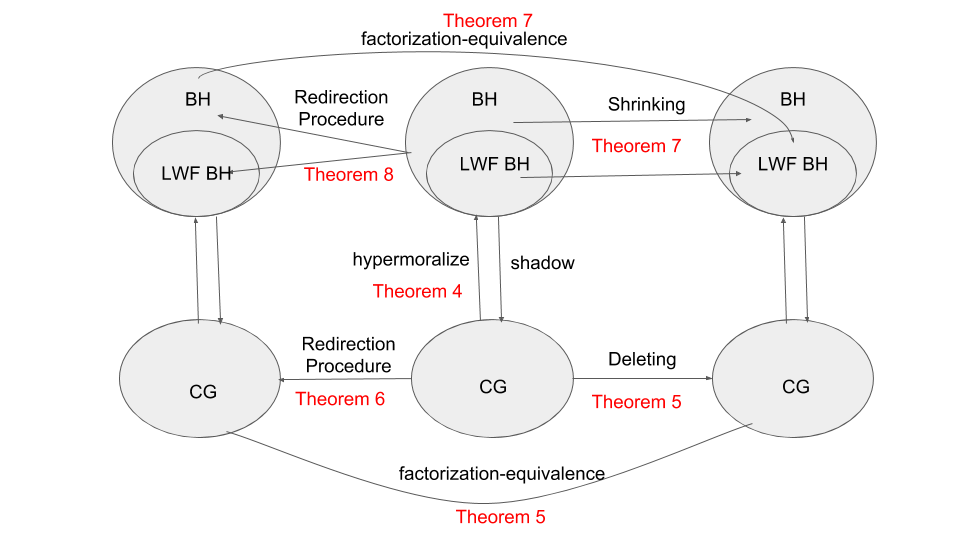}
    \caption{Commutative diagram of factorization equivalence}
		\label{f:intervention-equiv}
    \end{center}
\end{figure}

\begin{remark}
Figure \ref{f:intervention-equiv} summarizes all the results in Section \ref{sec:intervention}.
Given a chain graph $G$ and its canonical LWF DAH $\cH$, Theorem \ref{thm:factor-equiv} shows that $G$ and $\cH$ admit the same factorization decomposition. Suppose a set of variables $A$ is set by intervention. Theorem \ref{thm: CG_factorization-equivalence} and \ref{thm:intervene-CG} show that the the DAH obtained from $G$ by the redirection procedure or deleting the variables in $A$ admit the same factorization decomposition, which is also consistent with the intervention formula introduced in \cite{LR}. Similarly, Theorem \ref{thm: BH_factorization-equivalence} and \ref{thm:intervene-BH} show that the DAH obtained from $\cH$ by the redirection procedure or shrinking the variables in $A$ admit the same factorization decomposition, which is consistent with a hypergraph analogue of the formula in \cite{LR}.

\end{remark}

\section{Conclusion and Future Work}

This paper presents Bayesian hypergraph, a new probabilistic graphical model.  We showed that the model generalizes Bayesian networks, Markov networks, and LWF chain graphs, in the following sense: when the shadow of a Bayesian hypergraph is a chain graph, its Markov properties are the same as that of its shadow. We extended the causal interpretation of LWF chain graphs to Bayesian hypergraphs and provided corresponding formulas and two graphical procedures for intervention (as defined in \cite{LR}).

Directed acyclic hypergraphs can admit much finer factorizations than chain graphs, thus are more computationally efficient. The Bayesian hypergraph model also allows simpler and more general procedures for factorization as well as intervention. Furthermore, it allows a modeler to express independence of causal influence and other useful patterns, such as Noisy-OR, directly (i.e., graphically), rather than through annotations or the structure of a conditional probability table or function. We conjecture that the greater expressive power of Bayesian hypergraphs can be used to represent other PGMs and plan to explore the conjecture in future work.  

Learning the structure and the parameters of Bayesian hypergraphs is another direction for future work. For this purpose, we will need to provide a criterion for Markov equivalence of Bayesian hypergraphs.  The success of constraint-based structure learning algorithms for chain graphs leads us to hope that similar techniques would work for learning Bayesian hypergraphs. Of course, one should also explore whether a closed-form decomposable likelihood function can be derived in the discrete finite case.

\section{Acknowledgements}
This work is primarily supported by Office on Naval Research grant ONR N00014-17-1-2842.


\end{document}